\title{Complementarity in categorical\\quantum mechanics}
\author{Chris Heunen\thanks{Oxford University Computing Laboratory, supported by the Netherlands Organisation for Scientific Research (NWO).}}
\tikzset{bend angle=45, 
         baseline=(current bounding box.center), 
         inner sep=0ex,
         dot/.style={circle, draw=black, fill=black!50, inner sep=.33ex},
         box/.style={inner sep=.3ex, draw, rounded corners=.3ex},
         column sep=2ex,
         row sep=2ex}
\newcommand{\after}{\circ}
\newcommand{\cat}[1]{\ensuremath{\mathbf{#1}}}
\newcommand{\Cat}[1]{\ensuremath{\mathbf{#1}}}
\newcommand{\idmap}[1][]{\ensuremath{\mathrm{id}_{#1}}}
\newcommand{\id}[1][]{\idmap[#1]}
\newcommand{\op}{\ensuremath{^{\mathrm{op}}}}
\newcommand{\field}[1]{\ensuremath{\mathbb{#1}}}
\newcommand{\inprod}[2]{\ensuremath{\langle #1\,|\,#2 \rangle}}
\newcommand{\powerset}{\ensuremath{\mathcal{P}}}
\newcommand{\tensor}{\ensuremath{\otimes}}
\newcommand{\ie}{\textit{i.e.}~}
\newcommand{\eg}{\textit{e.g.}~}
\newcommand{\coker}{\ensuremath{\mathrm{coker}}}
\newcommand{\KSub}{\ensuremath{\mathrm{KSub}}}
\newcommand{\Proj}{\ensuremath{\mathrm{Proj}}}
\newcommand{\cC}{\ensuremath{\mathcal{C}}}
\renewcommand{\implies}{\ensuremath{\Rightarrow}}
\newcommand{\qedhere}{\ensuremath{\tag*{$\Box$}}}
\newcommand{\pullback}[1][dr]{\save*!/#1-1.2pc/#1:(-1,1)@^{|-}\restore}
\newcommand{\zeromono}{\ar@{ >->}|-*@{o}}
\newcommand{\zeroepi}{\ar@{->>}|-*@{o}}
\newcommand{\zeromonoepi}{\ar@{ >->>}|-*@{o}}
\newcommand{\xyline}[2][]{\ensuremath{\smash{\xymatrix@1#1{#2}}}}
\newcommand{\xyinline}[2][]{\ensuremath{\smash{\xymatrix@1#1{#2}}}^{\rule[8.5pt]{0pt}{0pt}}}
\theoremstyle{plain}
\newtheorem{theorem}{Theorem}
\newtheorem{lemma}[theorem]{Lemma}
\newtheorem{proposition}[theorem]{Proposition}
\newtheorem{corollary}[theorem]{Corollary}
\newtheorem{definition}[theorem]{Definition}
\newtheorem{example}[theorem]{Example}
\theoremstyle{nonumberplain}
\newtheorem{proof}{Proof}
\begin{document}

\maketitle

\begin{abstract}
  We relate notions of complementarity in three layers of
  quantum mechanics: (i) von Neumann algebras, (ii) Hilbert spaces,
  and (iii) orthomodular lattices. Taking a more general categorical
  perspective of which the above are instances, we consider dagger
  monoidal kernel categories for (ii), so that (i) become
  (sub)endohomsets and (iii) become subobject lattices.  
  By developing a `point-free' definition of copyability
  we link (i) commutative von Neumann subalgebras, (ii) classical
  structures, and (iii) Boolean subalgebras.
\end{abstract}

\section{Introduction}

Complementarity is a supporting pillar of the Copenhagen
interpretation of quantum mechanics. Unfortunately, Bohr's own
formulation of the principle remained imprecise and
flexible~\cite{scheibe:quantumlogic}, and to date there is no concensus
on a clear mathematical definition.
Here, we understand it, roughly, to mean that complete knowledge of a 
quantum system can only be attained through examining all of its possible
classical subsystems~\cite{held:complementarity}. Notice that, perhaps
unlike Bohr's own, this interpretation concerns \emph{all} classical
contexts, leading to a weaker notion of binary complementarity than
usual. To avoid clashes with the various existing terminologies and
their connotations, and to emphasize the distinction between talking
about \emph{two} (totally) incompatible classical contexts (as Bohr
typically did), and mentioning all of them, we will speak of
\emph{partially complementary} classical contexts only when 
considering \emph{two} of them. Only taken all together, 
(pairwise partially complementary) classical contexts give complete
information, and we call them \emph{completely complementary}.
This paper considers instances of this interpretation of
complementarity with regard to three aspects of quantum mechanics.    
\begin{enumerate}
  \item[(i)] The observables of a quantum system form a von Neumann
    algebra. In this setting, complete complementarity is customarily
    taken to mean that one has to look at all commutative von Neumann
    subalgebras~\cite{landsman:between,strocchi:infinite,butterfieldisham:kochenspecker1}.  
  \item[(ii)] The states of a quantum system are unit vectors in a Hilbert
    space, which can be coordinatized by choosing any orthonormal
    basis. Here, complete complementarity may be interpreted as saying
    that it takes measurements in all possible orthonormal bases (of
    many identical copies of a system) to determine its state
    perfectly~\cite{vonneumann:grundlagen}, as in state
    tomography~\cite{nielsenchuang:quantumcomputation,parthasarathy:estimation}.    
 \item[(iii)] The measurable properties of a quantum system form an
    orthomodular lattice. Complete complementarity translates to this
    view as stating that the lattice structure is determined by
    all Boolean
    sublattices~\cite{piron:foundations,kalmbach:orthomodularlattices}.  
\end{enumerate}

In fact, we will take a more general perspective, as all three layers,
separately, have recently been studied categorically.
\begin{enumerate}
  \item[(i)] The set of commutative von Neumann subalgebras of a von
    Neumann algebra gives rise to a topos of set-valued functors, whose
    intuitionistic internal logic sheds light on the the original
    noncommutative algebra in so far as complete complementarity is
    concerned~\cite{heunenlandsmanspitters:bohrification,doeringisham:thing}.  
  \item[(ii)] The category of Hilbert spaces can be abstracted to a dagger
    monoidal category, in which much of quantum mechanics can still
    be formulated~\cite{abramskycoecke:categoricalquantum}. In this
    framework, orthonormal bases are characterized as so-called
    classical
    structures~\cite{coeckepavlovic:classicalobjects,coeckepavlovicvicary:bases,abramskyheunen:hstar}. 
  \item[(iii)] Orthomodular lattices can be obtained as kernel
    subobjects in a so-called dagger kernel
    category~\cite{heunenjacobs:daggerkernellogic}. This paper
    considers Boolean sublattices systematically, in the tradition of
    \eg\cite{kochenspecker:hiddenvariables}.
\end{enumerate}

We will take the view that of these three layers, (ii) is the
primitive one, which the others derive from. Indeed, our main results
are in categories that are simultaneously dagger monoidal categories
and dagger kernel categories. We give definitions of partial and
complete complementarity for (i) commutative von Neumann subalgebras,
(ii) classical structures, and (iii) Boolean sublattices of the
orthomodular lattice of kernels. By developing a point-free notion of
copyability, we obtain a bijective correspondence between
partially complementary classical structures and partially
complementary Boolean sublattices. It is worth mentioning that this
seems to be the first positive use of tensor products in the study of
orthomodular lattices---the relation between tensor products and
orthomodular lattices has resisted attempts at structural
understanding so far, and only negative, restrictive, results are known.
Our second main contribution is to characterize 
categorically what partially complementary commutative von Neumann
subalgebras correspond to in terms of classical structures in the
category of Hilbert spaces, conceptually improving upon previous work
in this setting~\cite{petz:complementarity,parthasarathy:estimation}.
The plan of the paper is as follows:
Sections \ref{sec:classicalstructures}, \ref{sec:orthomodular} and
\ref{sec:vonneumannalgebras} study layers (ii), (iii) and (i)
respectively. Conclusions are then drawn in Section
\ref{sec:conclusion}. 
The author is grateful to Samson Abramsky, Ross Duncan, Klaas
Landsman, and Jamie Vicary for useful pointers and discussions.

\section{Classical structures}
\label{sec:classicalstructures}

A \emph{dagger} on a category $\cat{D}$ is a functor $\dag
\colon \cat{D}\op \to \cat{D}$ that acts on objects as $X^\dag=X$ and
satisfies $f^{\dag\dag}=f$ on morphisms. We will be interested in
dagger categories that also have tensor products and
kernels. By way of introduction we first recall these two extra structures
separately, and then consider how they cooperate. 

A \emph{dagger symmetric monoidal category} is a dagger category that
is simultaneously symmetric monoidal, satisfies
$(f \tensor g)^\dag = f^\dag \tensor g^\dag$, and whose coherence
isomorphisms such as $\lambda \colon X \tensor I \to X$ satisfy
$\lambda^{-1}=\lambda^\dag$. For more information about dagger
monoidal categories and their uses in physics, we refer
to~\cite{coeckepaquettepavlovic:structuralism,abramskycoecke:categoricalquantum}. A morphism $f$ is called \emph{dagger monic} when $f^\dag \after f = \id$.

\begin{definition}
  A \emph{classical structure} in a dagger symmetric
  monoidal category $\cat{D}$ is a commutative semigroup $\delta
  \colon X \to X \tensor X$ that satisfies $\delta^\dag \after \delta
  = \id$ and the following so-called H*-axiom: there is an involution
  $* \colon \cat{D}(I,X)\op \to \cat{D}(I,X)$ such that $\delta^\dag
  \after (x^* \tensor \id) = (x^\dag \tensor \id) \after \delta$.
\end{definition}

Spelling out this terse definition in the graphical
calculus~\cite{selinger:graphicallanguages}, its conditions 
look as follows, where $\delta$ is depicted as 
$\vcenter{\hbox{\begin{tikzpicture}[thick,xscale=0.2,yscale=-0.2] 
    \node (3) at (-1,-2) {};
    \node [style=dot] (1) at (0,-1) {};
    \node (0) at (0,0) {};
    \node (2) at (1,-2) {};
    \draw [bend left=45]  (1) to (2);
    \draw [bend left=45]  (3) to (1);
    \draw  (1) to (0);
  \end{tikzpicture}}}$.

\[
  \begin{tikzpicture}[thick,scale=0.5]
    \node [style=dot] (1) at (0,-1) {};
    \node (5) at (0,1) {};
    \node (0) at (0,-2) {};
    \node (6) at (1,1) {};
    \node (2) at (1,0) {};
    \node (4) at (-2,1) {};
    \node [style=dot] (3) at (-1,0) {};
    \draw (1) to (0);
    \draw [bend right=45]  (1) to (2);
    \draw [bend left=45]  (5) to (3);
    \draw (6) to (2.south);
    \draw [bend right=45]  (4) to (3);
    \draw [bend right=45]  (3) to (1);
  \end{tikzpicture}
  =
  \begin{tikzpicture}[thick,scale=0.5]
    \node [style=dot] (1) at (0,-1) {};
    \node (5) at (0,1) {};
    \node (0) at (0,-2) {};
    \node (6) at (-1,1) {};
    \node (2) at (-1,0) {};
    \node (4) at (2,1) {};
    \node [style=dot] (3) at (1,0) {};
    \draw  (1) to (0);
    \draw [bend left=45]  (1) to (2);
    \draw [bend right=45]  (5) to (3);
    \draw  (6) to (2.south);
    \draw [bend left=45]  (4) to (3);
    \draw [bend left=45]  (3) to (1);
  \end{tikzpicture}
  \qquad
  \begin{tikzpicture}[thick,scale=0.5]
    \node (3) at (-1,2) {};
    \node [style=dot] (1) at (0,1) {};
    \node (0) at (0,0) {};
    \node (2) at (1,2) {};
    \draw [bend right=45]  (1) to (2);
    \draw [bend right=45]  (3) to (1);
    \draw  (1) to (0);
  \end{tikzpicture}
  =
  \begin{tikzpicture}[thick,scale=0.5,cross/.style={preaction={draw=white, -, line width=5pt}}]
    \node [style=dot] (1) at (0,1) {};
    \node (0) at (0,0) {};
    \node (2) at (1,2) {};
    \node (3) at (-1,2) {};
    \node (4) at (1,3) {};
    \node (5) at (-1,3) {};
    \draw (1) to (0);
    \draw (1) to [bend right] (2) .. controls (4) and (3) .. (5);
    \draw[cross] (1) to [bend left] (3) .. controls (5) and (2) .. (4);
  \end{tikzpicture}
  \]
  \[
  \begin{tikzpicture}[thick,scale=0.5]
    \node (0) at (0,-2) {};
    \node (3) at (0,2) {};
    \node[dot] (2) at (0,1) {};
    \node[dot] (1) at (0,-1) {};
    \draw [bend right=90]  (2) to (1);
    \draw  (2) to (3);
    \draw [bend right=90]  (1) to (2);
    \draw  (0) to (1);
  \end{tikzpicture}
  = 
  \begin{tikzpicture}[thick,scale=0.5]
    \node (0) at (0,-2) {};
    \node (3) at (0,2) {};
    \draw  (0) to (3);
  \end{tikzpicture}
  \qquad\qquad
  \begin{tikzpicture}[thick,scale=0.5]
    \node [style=dot] (1) at (0,0) {};
    \node [box] (2) at (-1,-1) {$x^*$};
    \node (0) at (0,1) {};
    \node (3) at (1,-1) {\phantom{$x^*$}};
    \draw (0) to (1);
    \draw [bend left=45]  (1) to (3.north) -- (3.south);
    \draw [bend right=45]  (1) to (2);
  \end{tikzpicture}
  =
  \begin{tikzpicture}[thick,scale=0.5]
    \node [style=dot] (1) at (0,0) {};
    \node [box] (2) at (-1,1) {$x^\dag$};
    \node (0) at (0,-1) {};
    \node (3) at (1,1) {\phantom{$x^\dag$}};
    \draw (0) to (1);
    \draw [bend right=45]  (1) to (3.south) -- (3.north);
    \draw [bend left=45]  (1) to (2);
  \end{tikzpicture}
\]
We will not
explicitly use much of a classical structure except its type and the
fact that it is dagger monic; for more information we refer
to the forthcoming article~\cite{abramskyheunen:hstar}, and the
aforementioned~\cite{coeckepaquettepavlovic:structuralism}. 

A \emph{dagger kernel category} is a dagger category that has a zero
object $0$, and in which every morphism has a kernel that is dagger
monic. We write $\ker(f)$ for the kernel of $f$, and
$\coker(f)=\ker(f^\dag)^\dag$ for its cokernel. The definition
$k^\perp = \ker(k^\dag)$ for kernels $k$ yields an orthocomplement on
the partially ordered set $\KSub(X)$ of kernel subobjects of a fixed
object $X$. The main result of~\cite{heunenjacobs:daggerkernellogic},
which we refer to for more information about dagger kernel categories,
is that this poset $\KSub(X)$ is always an orthomodular lattice.

The goal of this section is to investigate when kernels $k \colon K \to
X$ are `compatible' with a given classical structure. To do so, we
develop a notion of copyability that has to be `point-free' because
$K$ is typically not the monoidal unit $I$.

\subsection{Kernels and tensor products}
\label{subsec:kerandtensor}

Fix a category $\cat{D}$, and assume it to be a dagger symmetric
monoidal category and a dagger kernel category simultaneously, which
additionally satisfies 
\[
  \ker(f) \tensor \ker(g) = \ker(f \tensor \id) \wedge \ker(\id \tensor g)
\]
for all morphisms $f$ and $g$.\footnote{One might consider additional
coherence requirements such as $\ker(f \tensor g)=\ker(f \tensor
\id) \vee \ker(\id \tensor g)$, but these are not necessary for our
present purposes.}
The categories $\Cat{Hilb}$ and $\Cat{Rel}$ both satisfy the above
relationship between tensor products and kernels. Some coherence
properties follow easily from the assumptions:
\begin{align*}
  \ker(f) \tensor 0 & = 0, &
  0 \tensor \ker(g) & = 0, \\
  \ker(f) \tensor \id & = \ker(f \tensor \id), &
  \id \tensor \ker(g) & = \ker(\id \tensor g), \\
  \ker(f) \tensor \id = 0 & \Leftrightarrow \ker(f) = 0, &
  \id \tensor \ker(g) = 0 & \Leftrightarrow \ker(g) = 0.
\end{align*}

Notice that requiring $\ker(f \tensor g) = \ker(f) \tensor \ker(g)$
would have been too strong, for then 
$\ker(f) \tensor \id = \ker(f) \tensor \ker(0) = \ker(f \tensor 0) =
\ker(0) = \id$ for any $f$.

\subsection{Copyability}

Throughout this section we fix a classical structure $\delta
\colon X \to X \tensor X$. 

\begin{definition}
\label{def:copyable}
  An endomorphism $p \colon X \to X$ is called \emph{copyable} (along
  $\delta$) when  
  $
    \delta \after p = (p \tensor p) \after \delta.
  $
  A nonendomorphism $k \colon K \to X$ is called \emph{copyable}
  (along $\delta$) when $P(k)=k \after k^\dag$ is. 
\end{definition}

We start by relating the previous definition to copyability
of vectors as used in~\cite{coeckeduncan:observables}.

\begin{lemma}
  The following are equivalent for a unit vector $x$ in $H \in \Cat{Hilb}$:
  \begin{enumerate}
   \item[(a)] the morphism $\field{C} \to H$ defined by $1 \mapsto x$ is copyable;
   \item[(b)] there is a phase $z \in \field{C}$ with $|z|=1$ such
     that $\delta(x) = z \cdot (x \tensor x)$; 
   \item[(c)] there is a unit vector $x' \in H$ with $P(x)=P(x')$ and
     $\delta(x')=x' \tensor x'$.
  \end{enumerate}
\end{lemma}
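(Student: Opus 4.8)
The plan is to route everything through condition (b), proving the cycle by establishing (a)$\Leftrightarrow$(b) together with the easy phase-bookkeeping equivalence (b)$\Leftrightarrow$(c). The first thing I would record is the concrete shape of the objects involved. Writing $x\colon\field{C}\to H$ for the morphism $1\mapsto x$, its adjoint $x^\dag$ is $h\mapsto\inprod{x}{h}$, so $P(x)=x\after x^\dag$ sends $h\mapsto\inprod{x}{h}\,x$; since $\norm{x}=1$ this is precisely the rank-one orthogonal projection onto $\linspan{x}$. Consequently $P(x)\tensor P(x)$ is the rank-one projection onto $\linspan{x\tensor x}$, and by Definition~\ref{def:copyable} condition (a) unfolds to the single operator identity $\delta\after P(x)=(P(x)\tensor P(x))\after\delta$ on $H$.

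For (a)$\Rightarrow$(b) I would simply evaluate this identity at the vector $x$ itself. Because $\inprod{x}{x}=1$ the left-hand side returns $\delta(x)$, while the right-hand side, being a projection onto $\linspan{x\tensor x}$, returns a scalar multiple of $x\tensor x$; setting $z:=\inprod{x\tensor x}{\delta(x)}$ gives $\delta(x)=z\,(x\tensor x)$. That $|z|=1$ then falls out of the classical-structure axiom $\delta^\dag\after\delta=\id$: as $\delta$ is thereby an isometry, $1=\norm{x}=\norm{\delta(x)}=|z|\,\norm{x\tensor x}=|z|$. For the converse (b)$\Rightarrow$(a) I would push (b) through $\delta^\dag$: applying $\delta^\dag$ to $\delta(x)=z\,(x\tensor x)$ and using $\delta^\dag\after\delta=\id$ yields $\delta^\dag(x\tensor x)=\bar z\,x$ (with $1/z=\bar z$ as $|z|=1$). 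Then for every $h$ one computes $\inprod{x\tensor x}{\delta(h)}=\inprod{\delta^\dag(x\tensor x)}{h}=\inprod{\bar z\,x}{h}=z\,\inprod{x}{h}$, which is exactly the scalar identity needed to equate $(P(x)\tensor P(x))\after\delta$ with $\delta\after P(x)$, both sides being multiples of $x\tensor x$. Hence $P(x)$ is copyable.

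The equivalence (b)$\Leftrightarrow$(c) is then pure phase arithmetic. Given (b), I would set $x':=z\,x$ and check that $\delta(x')=z\,\delta(x)=z^2(x\tensor x)=x'\tensor x'$ while $P(x')=|z|^2P(x)=P(x)$. Conversely, $P(x')=P(x)$ with both vectors of unit length forces $x'=w\,x$ for some phase $w$; substituting into $\delta(x')=x'\tensor x'$ gives $w\,\delta(x)=w^2(x\tensor x)$, whence $\delta(x)=w\,(x\tensor x)$, which is (b).

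I expect no genuinely hard step: the whole argument is driven by the single structural fact that $\delta$ is a dagger-monic, hence isometric, comultiplication, together with the recognition that $P(x)\tensor P(x)$ is the rank-one projection onto $x\tensor x$, which collapses the operator identity (a) into a scalar equation. The only place demanding care is the antilinear/linear bookkeeping of the inner product and the identity $1/z=\bar z$. It is worth noting that no appeal to the orthonormal basis copied by $\delta$, nor to the H*-axiom, is required.
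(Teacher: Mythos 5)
Your proof is correct and follows essentially the same route as the paper: evaluate the copyability identity at $x$ to extract $z=\inprod{x\tensor x}{\delta(x)}$, use the isometry $\delta^\dag\after\delta=\id$ to get $|z|=1$ and to run the converse, and handle (b)$\Leftrightarrow$(c) via $x'=z\cdot x$. The only cosmetic difference is that for (b)$\Rightarrow$(a) you compute $\delta^\dag(x\tensor x)=\bar z\,x$ and pass through the adjoint, where the paper instead inserts $\inprod{x}{y}=\inprod{\delta(x)}{\delta(y)}$; both rest on the same fact that $\delta$ is an isometry.
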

\begin{proof}
  For (a)$\Rightarrow$(b):
  \begin{align*}
    \delta(x) = (\delta \after P(x))(x) = (P(x) \tensor P(x)) \after
    \delta(x) = \inprod{x \tensor x}{\delta(x)} \cdot (x \tensor x).
  \end{align*}
  Taking $z = \inprod{x \tensor x}{\delta(x)}$ gives $|z| =
  \|\inprod{x \tensor x}{\delta(x)} \cdot (x \tensor x)\| =
  \|\delta(x)\| = \|x\| = 1$.

  Conversely, to see (b)$\Rightarrow$(a):
  \begin{align*}
        (\delta \after P(x))(y) 
    & = \delta \after x \after x^\dag (y) \\
    & = \inprod{x}{y} \cdot \delta(x) \\
    & = \inprod{x}{y} \cdot z \cdot (x \tensor x) \\
    & = \inprod{\delta(x)}{\delta(y)} \cdot z \cdot (x \tensor x) \\
    & = |z|^2 \cdot \inprod{x \tensor x}{\delta(y)} \cdot (x \tensor
    x) \\
    & = \inprod{x \tensor x}{\delta(y)} \cdot (x \tensor x) \\
    & = (x \tensor x) \after (x^\dag \tensor x^\dag) \after \delta(y)
    \\
    & = P(x \tensor x) \after \delta(y).
  \end{align*}
  The equivalence of (b) and (c) is established by the equality $x' = z  \cdot x$.
  \qed
\end{proof}

\begin{example}
\label{ex:zeroone}
  In any dagger kernel category with tensor products satisfying the
  coherence set out in section~\ref{subsec:kerandtensor}, zero
  morphisms and identity morphisms are always copyable:
  \begin{align*}
    \delta \after P(0) = \delta \after 0 = 0 \after \delta = (0 \tensor
    0) \after \delta = P(0 \tensor 0) \after \delta, \\
    \delta \after P(\id) = \delta \after \id = \delta = (\id \tensor
    \id) \after \delta = P(\id \tensor \id) \after \delta.
  \end{align*}
  These two kernels are called the \emph{trivial} kernels.
\end{example}

\begin{example}
\label{ex:Hilb}
  In the category $\Cat{Hilb}$ of Hilbert spaces, a classical
  structure $\delta$ corresponds to the choice of an orthonormal basis
  $(e_i)$ \cite{abramskyheunen:hstar}, whereas a kernel corresponds to a
  (closed) linear subspace \cite{heunenjacobs:daggerkernellogic}. A
  kernel is copyable if and only if it is the linear span of a subset
  of the orthonormal basis. 
\end{example}

\begin{example}
\label{ex:Rel}
  In the category $\Cat{Rel}$ of sets and relations, a classical
  structure $\delta$ on $X$ corresponds to (a disjoint union of) Abelian
  group structure(s) on $X$ \cite{pavlovic:frobeniusinrel}, and a kernel
  is to a subset $K \subseteq X$
  \cite{heunenjacobs:daggerkernellogic}. A kernel is copyable iff
  \begin{align*}
        \{(x\cdot y,(x,y)) \mid x,y \in X, x \cdot y \in K \}
    & = \delta \after P(k) \\
    & = P(k \tensor k) \after \delta \\
    & = \{(x\cdot y,(x,y)) \mid x \in K, y \in K \},
  \end{align*}
  \ie if and only if $x \in K \wedge y \in K \Leftrightarrow x \cdot y
  \in K$. One direction of this equivalence implies that $K$ is a
  subsemigroup. Fixing $k \in K$, we see that for any $x \in X$ there
  is $y=x^{-1} \cdot k$ such that $x \cdot y \in K$. Therefore, the
  other direction implies that $x \in K$. That is, $K=X$.
  We conclude that the only copyable kernels in $\Cat{Rel}$ are the
  trivial ones.

  This signifies that $\Cat{Rel}$ does not `have enough kernels'. 
  There is an order isomorphism between $\{ p \in \Cat{Rel}(X,X) \mid
  p=p^\dag=p \after p \leq \id\}$ and $\KSub(X)$ for a certain order $\leq$ on
  endohomsets~\cite[Proposition~12]{heunenjacobs:daggerkernellogic}.
  The situation improves when we consider all $p \in \Cat{Rel}(X,X)$
  with $p=p^\dag=p^2$ instead of just the ones below the identity:
  these are precisely \emph{partial equivalence relations}, \ie
  the symmetric and transitive relations. One finds that such
  an endomorphism $\sim$ is copyable if and only if it is a `groupoid
  congruence' in the following sense:
  \[
    x\cdot y \sim z \Longleftrightarrow \exists_{x',y'}[x \sim x', y
    \sim y', x'\cdot y'=z]
  \]
\end{example}

\begin{lemma}
\label{lem:copyablefillin}
  A dagger monic $k$ is copyable if and only if there is a (unique)
  morphism $\delta_k$ making the following diagram commute:
  \[\xymatrix{
      X \ar@{ >->}_-{\delta}[d] \ar@{->>}^-{k^\dag}[r] 
    & K \ar@{-->}^-{\delta_k}[d] \ar@{ >->}^-{k}[r] 
    & X \ar@{ >->}^-{\delta}[d] \\
      X \tensor X \ar@{->>}_-{k^\dag \tensor k^\dag}[r] 
    & K \tensor K \ar@{ >->}_-{k \tensor k}[r] 
    & X \tensor X.
  }\]
\end{lemma}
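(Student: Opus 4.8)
The plan is to observe that both directions of the biconditional turn on the single candidate morphism $\delta_k = (k^\dag \tensor k^\dag) \after \delta \after k$, and to show that this formula is forced by the commutativity of either square. Since $k$ is dagger monic we have $k^\dag \after k = \id$, so $k^\dag$ is a split epi; moreover, using $(f \tensor g)^\dag = f^\dag \tensor g^\dag$, the morphism $k \tensor k$ is again dagger monic with $(k^\dag \tensor k^\dag) \after (k \tensor k) = \id$. This delivers uniqueness at once: post-composing the left square $\delta_k \after k^\dag = (k^\dag \tensor k^\dag) \after \delta$ with $k$ gives $\delta_k = (k^\dag \tensor k^\dag) \after \delta \after k$, and pre-composing the right square $\delta \after k = (k \tensor k) \after \delta_k$ with $k^\dag \tensor k^\dag$ yields the same expression. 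So in either case $\delta_k$ is determined, and it suffices to work with this explicit formula.

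For the forward implication I would assume $k$ copyable, \ie $\delta \after P(k) = P(k \tensor k) \after \delta$ with $P(k) = k \after k^\dag$ and $P(k \tensor k) = (k \tensor k) \after (k^\dag \tensor k^\dag)$, and verify the two squares for $\delta_k = (k^\dag \tensor k^\dag) \after \delta \after k$. For the left square, $\delta_k \after k^\dag = (k^\dag \tensor k^\dag) \after \delta \after P(k)$; rewriting $\delta \after P(k)$ as $(k \tensor k) \after (k^\dag \tensor k^\dag) \after \delta$ via copyability and cancelling $(k^\dag \tensor k^\dag) \after (k \tensor k) = \id$ leaves exactly $(k^\dag \tensor k^\dag) \after \delta$. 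For the right square, $(k \tensor k) \after \delta_k = P(k \tensor k) \after \delta \after k$; rewriting $P(k \tensor k) \after \delta$ as $\delta \after P(k)$ and cancelling $k^\dag \after k = \id$ leaves $\delta \after k$. Both are short manipulations using only copyability and the dagger-monic identities.

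For the converse I would assume the diagram commutes and recover copyability by composing the two squares. Post-composing the right square with $k^\dag$ gives $\delta \after P(k) = (k \tensor k) \after \delta_k \after k^\dag$, and substituting the left square $\delta_k \after k^\dag = (k^\dag \tensor k^\dag) \after \delta$ turns the right-hand side into $(k \tensor k) \after (k^\dag \tensor k^\dag) \after \delta = P(k \tensor k) \after \delta$, which is precisely the equation expressing that $k$ is copyable.

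I do not expect a genuine obstacle here beyond bookkeeping; the one point that makes every cancellation legitimate, and which I would isolate at the outset, is the interaction of the dagger with the tensor, namely $(k \tensor k)^\dag = k^\dag \tensor k^\dag$ and the consequent dagger-monicity of $k \tensor k$. Everything else is a direct diagram chase.
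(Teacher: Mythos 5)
Your proof is correct and follows essentially the same route as the paper's: both hinge on the idempotence of $P(k)=k\after k^\dag$ and the identity $P(k)\tensor P(k)=(k\tensor k)\after(k^\dag\tensor k^\dag)$ to split the copyability equation into the two squares, with the explicit formula $\delta_k=(k^\dag\tensor k^\dag)\after\delta\after k$ giving uniqueness. The paper merely compresses this into a three-line chain of equivalences, so your version is just a more detailed rendering of the same argument (note only that your uses of ``post-composing'' and ``pre-composing'' are swapped relative to the usual convention, though the displayed formulas make the intent unambiguous).
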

\begin{proof}  
  \begin{align*}
    &                     k \text{ is copyable} \\
    & \Longleftrightarrow 
      (P(k) \tensor P(k)) \after \delta = \delta \after P(k) \\
    & \Longleftrightarrow 
      (P(k) \tensor P(k)) \after \delta = (P(k) \tensor P(k))
      \after \delta \after P(k) = \delta \after P(k) \\      
    & \Longleftrightarrow 
      \exists_{\delta_k} . \delta \after k = (k \tensor k) \after \delta_k,
      \delta_k \after k^\dag = (k^\dag \tensor k^\dag) \after \delta.
    \qedhere
  \end{align*}
\end{proof}

We say that $f$ is a \emph{dagger retract} of $g$ if there are dagger
monic $a$ and $b$ making the following diagram commute:
\[\xymatrix{
    \cdot \ar^-{a}[r] \ar_-{f}[d] 
  & \cdot \ar^-{a^\dag}[r] \ar^-{g}[d]
  & \cdot \ar^-{f}[d] \\
    \cdot \ar_-{b}[r]
  & \cdot \ar_-{b^\dag}[r] 
  & \cdot
}\]
Notice that if $f$ and $f'$ are both dagger retracts of $g$ (along
the same $a$ and $b$), then $f=b^\dag \after b \after f = b^\dag
\after g \after a = f' \after a^\dag \after a = f'$.

\begin{proposition}
\label{prop:restrictedfrob}
  If $k$ is a copyable dagger monic, $\delta_k$ is a classical structure.
\end{proposition}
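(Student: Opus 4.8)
The plan is to transfer each defining property of a classical structure from $\delta$ to $\delta_k$, using Lemma~\ref{lem:copyablefillin} to obtain $\delta_k$ together with the two relations $\delta\after k=(k\tensor k)\after\delta_k$ and $\delta_k\after k^\dag=(k^\dag\tensor k^\dag)\after\delta$. Combining these yields the closed forms $\delta_k=(k^\dag\tensor k^\dag)\after\delta\after k$ and $\delta_k^\dag=k^\dag\after\delta^\dag\after(k\tensor k)$, which are what I would compute with. The observation that organizes the argument is that $\delta_k$ is a \emph{dagger retract} of $\delta$ along $a=k$ and $b=k\tensor k$: the two relations are precisely the two squares. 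More generally, any composite built from $\delta_k$ is a dagger retract of the corresponding composite built from $\delta$, along $a=k$ and a suitable tensor power of $k$ for $b$.

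For the semigroup axioms I would argue as follows. Both $(\delta_k\tensor\id)\after\delta_k$ and $(\id\tensor\delta_k)\after\delta_k$ are dagger retracts of $(\delta\tensor\id)\after\delta$ and $(\id\tensor\delta)\after\delta$ respectively, along $a=k$ and $b=k\tensor k\tensor k$, the two squares following mechanically from the relations above and functoriality of $\tensor$. Since the semigroup $\delta$ is associative the two targets coincide, so by the uniqueness of dagger retracts noted before the proposition the two sources coincide, giving associativity of $\delta_k$; commutativity is identical, checking the squares by naturality of the symmetry and using $\sigma\after\delta=\delta$ on the target side. The isometry condition is quickest by direct calculation: substituting the closed forms gives $\delta_k^\dag\after\delta_k=k^\dag\after\delta^\dag\after(P(k)\tensor P(k))\after\delta\after k$, and copyability of $P(k)$ in the form $(P(k)\tensor P(k))\after\delta=\delta\after P(k)$, together with $P(k)\after k=k$ and $\delta^\dag\after\delta=\id$, collapses this to $k^\dag\after k=\id$.

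The H*-axiom is where I expect the real work, and where the clean retract bookkeeping breaks down. I would take the involution on $\cat{D}(I,K)$ to be $y\mapsto k^\dag\after(k\after y)^*$, the evident restriction of $*$ along $k$. Writing $x=k\after y$ and using $x^\dag\after k=y^\dag$, both sides of the target identity $\delta_k^\dag\after(y^*\tensor\id)=(y^\dag\tensor\id)\after\delta_k$ reduce, after substituting the closed forms, to the morphism $k^\dag\after\delta^\dag\after(\blank\tensor k)$ applied to $P(k)\after x^*$ (from the left-hand side) and to $x^*$ (from the right-hand side). These agree because the identity $k^\dag\after\delta^\dag\after(z\tensor k)=k^\dag\after\delta^\dag\after(P(k)\after z\tensor k)$ holds for every $z$: the equation $k^\dag=k^\dag\after P(k)$ lets one insert $P(k)$, copyability pushes it through $\delta^\dag$, and $P(k)\after k=k$ absorbs it. This absorption is exactly what a naive dagger-retract transfer cannot see, since the corresponding left square would demand $P(k)\after x^*=x^*$, i.e. that $x^*$ lie in the image of $k$, which need not hold; only after composing with $k^\dag$ does the obstruction vanish.

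The remaining and most delicate obligation is to verify that $y\mapsto k^\dag\after(k\after y)^*$ is genuinely an involution, and not merely an operation satisfying the H*-equation. This reduces to showing that the copyable projection $P(k)=k\after k^\dag$ commutes with $*$, equivalently that the image of a copyable kernel is closed under $*$; granting this, $(y^*)^*=k^\dag\after(P(k)\after x^*)^*=k^\dag\after P(k)\after x=k^\dag\after x=y$. I expect this $*$-invariance of copyable kernels to be the crux of the proposition, and to be the essential use of copyability beyond the two fill-in relations: it holds in the motivating examples (in $\Hilb$ it is conjugation preserving a span of basis vectors), and I would derive it from copyability together with the faithfulness that pins down the involution determined by the H*-equation.
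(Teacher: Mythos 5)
Your proposal follows essentially the same route as the paper's proof: the algebraic axioms (associativity, commutativity, isometry) are transferred by observing that $\delta_k$ is a dagger retract of $\delta$ along $k$ and $k\tensor k$ and invoking the uniqueness of dagger retracts, and the H*-axiom is verified for the same candidate involution $y\mapsto k^\dag\after(k\after y)^*$ by a computation that is a reorganization of the paper's (the paper inserts $k\after k^\dag$ via the fill-in squares of Lemma~\ref{lem:copyablefillin} where you use the absorption identity $k^\dag\after\delta^\dag\after((P(k)\after z)\tensor k)=k^\dag\after\delta^\dag\after(z\tensor k)$; both rest on the same copyability equation, and your version checks out). The one point of divergence is your final paragraph: you rightly note that the definition asks for an \emph{involution}, so one should also check $(y^*)^*=y$, which you reduce to $P(k)\after(k\after y)^*=(k\after y)^*$ --- but you do not prove this, and the appeal to ``faithfulness that pins down the involution'' is not an argument as it stands, since without a unit the H*-equation need not determine $x^*$ uniquely. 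So your proposal is incomplete there. Be aware, though, that the paper's own proof silently skips this check as well, verifying only the H*-equation for the transported $*$ (and its concluding remarks state that the H*-axiom is never actually used); the obligation you flagged is real relative to the stated definition, but it is shared with, not introduced relative to, the published argument.
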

\begin{proof}
  If $k$ is a copyable dagger monic, then it follows from
  Lemma~\ref{lem:copyablefillin} that $\delta_k$ is a dagger retract of
  $\delta$, and $\delta_k^\dag$ 
  is a dagger retract of $\delta^\dag$. Therefore, $\delta_k$ 
  is associative, commutative, and is dagger monic. For
  example, to verify commutativity, notice that $\gamma_k \colon K
  \tensor K \to K \tensor K$ is a dagger retract of $\gamma \colon X \tensor
  X \to X \tensor X$. Since dagger retracts compose, this means that
  $\gamma_k \after \delta_k$ and $\delta_k$ are both dagger retracts
  (along the same morphisms) of $\gamma \after \delta = \delta$. Hence
  $\gamma_k \after \delta_k = \delta_k$. The other algebraic
  properties are verified similarly (including the Frobenius equation). 
  
  We are left to check the H*-axiom. Let $x \colon I \to
  K$. Since $\delta$ satisfies the H*-axiom, there is $(k \after x)^*
  \colon I \to X$ such that 
  \[
      \delta^\dag \after ((k \after x)^* \tensor \id)
    = ((k \after x)^\dag \tensor \id) \after \delta.
  \]
  Now put $x^* = k^\dag \after (k \after x)^* \colon I \to K$. Then:
  \begin{align*}
        \delta_k^\dag \after (x^* \tensor \id)
    & = \delta_k^\dag \after (k^\dag \tensor \id) 
                     \after ((k \after x)^* \tensor \id)  \\
    & = \delta_k^\dag \after (k^\dag \tensor k^\dag) 
                    \after ((k \after x)^* \tensor \id) \after k \\
    & = k^\dag \after \delta^\dag \after ((k \after x)^* \tensor \id)
              \after k \\
    & = k^\dag \after ((k \after x)^\dag \tensor \id) 
              \after \delta \after k \\
    & = (x^\dag \tensor \id) \after (k^\dag \tensor k^\dag) 
                            \after \delta \after k \\
    & = (x^\dag \tensor \id) \after \delta_k \after k^\dag \after k \\
    & = (x^\dag \tensor \id) \after \delta_k.
  \end{align*}
  Hence $\delta_k$ satisfies the H*-axiom, too.
  \qed  
\end{proof}

Observe that it follows from the previous proposition that a dagger
monic $k$ is copyable if and only if its domain carries a classical
structure $\delta_k$ and $k$ is simultaneously a homomorphism of
nonunital monoids and of nonunital comonoids.

\subsection{Complementarity and mutual unbiasedness}

\begin{definition}
\label{def:complementaryclassicalstructures}
  Two classical structures are \emph{partially complementary}
  if no nontrivial kernel is simultaneously copyable along both.
\end{definition}

The above definition clashes with complementarity of classical
structures as defined in \cite{coeckeduncan:observables}. Let us 
spend some time developing a notion that does correspond to
complementarity in the sense of \cite{coeckeduncan:observables}
directly. 

\begin{definition}
  A morphism $x \colon U \to X$ is called \emph{unbiased} (relative to
  $\delta$) if and only if $P(x^\dag \after k) = P(x^\dag \after l)$ for all
  copyable kernels $k$ and $l$.
\end{definition}

Recall that if the ambient category is simply well-pointed, then
$\KSub(X)$ is atomic, and its atoms are precisely the kernels with domain
$I$~\cite{heunenjacobs:daggerkernellogic}. If such `points' are
unbiased in the above sense, then they are unbiased vectors in the sense
of~\cite{coeckeduncan:observables}. 

Another advantage of the previous definition in this point-based setting
is that it does not need to specify what the scalars $\inprod{k}{x}$ 
are. In the traditional point-based setting this scalar involves the
dimension of the carrier Hilbert space, and is therefore limited to
finite-dimensional spaces. The above definition can be interpreted
regardless of dimensional aspects.

% The following lemma shows that the notion of
% unbiased kernels reduces to that of unbiased vectors
% in~\cite{coeckeduncan:observables}.

% \begin{conjecture}
%   If $I$ is a simple generator, then $P(x^\dag \after \bigvee_i k_i) =
%   P(\bigvee_i x^\dag \after k_i)$ for aligned kernels $k_i \colon I \to X$.
% \end{conjecture}

% \begin{lemma}
%   If $I$ is a simple generator, then $x$ is unbiased when $P(x^\dag \after
%   k)=P(x^\dag \after l)$ for all aligned kernels $k$ and $l$ with
%   domain $I$.
% \end{lemma}
% \begin{proof}
%   By definition, $x$ is unbiased when $P(x^\dag \after k)=P(x^\dag
%   \after l)$ for all aligned kernels $k$ and $l$. Since $I$ is assumed
%   to be a simple generator, $\KSub(X)$ is atomic and
%   atomistic~\cite{heunenjacobs:daggerkernellogic}. Say 
%   that $k=\bigvee_i k_i$ and $l=\bigvee_j l_j$ for $k_i,l_j \colon I
%   \to X$. Then:
%   \begin{align*}
%         P(x^\dag \after k)  
%     & = P(x^\dag \after \bigvee_i k_i) \\
%     & = P(\bigvee_i x^\dag \after k_i) 
%         \eqcomment{(by the previous conjecture)} \\
%     & = \bigvee_i P(x^\dag \after k_i)
%         \eqcomment{($P$ is order isomorphism)} \\
%     & = \bigvee_i P(x^\dag \after l_i)
%         \eqcomment{(by assumption)} \\
%     & = P(x^\dag \after \bigvee_i l_i) \\
%     & = P(x^\dag \after l).
%   \end{align*}
%   \qed
% \end{proof}

\begin{lemma}
\label{lem:ifcopyablethennotunbiased}
  If a nonzero kernel is copyable (along $\delta$) then it is not
  unbiased (relative to $\delta$). 
\end{lemma}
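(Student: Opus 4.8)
The plan is to turn the universally quantified condition in the definition of unbiasedness against itself: a copyable kernel is, in particular, one of the copyable kernels that may be substituted for $k$, and pairing it with a trivial kernel will immediately break the required equality of overlaps. Concretely, let $m \colon M \to X$ be a nonzero copyable kernel, and regard it as the morphism $x = m$ in the definition of unbiasedness. Assume toward a contradiction that $m$ is unbiased; then $P(m^\dag \after k) = P(m^\dag \after l)$ is forced for \emph{every} pair of copyable kernels $k, l$. I will simply take $k = m$ itself and $l = 0$ the zero kernel, both of which are copyable (the latter by Example~\ref{ex:zeroone}).

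For the first choice, $m$ is a kernel and hence dagger monic, so $m^\dag \after m = \id[M]$, giving $P(m^\dag \after m) = P(\id[M]) = \id[M] \after \id[M]^\dag = \id[M]$. For the second, composing with the zero morphism yields $m^\dag \after 0 = 0$, so that $P(m^\dag \after 0) = 0$ is the zero endomorphism of $M$. Unbiasedness would therefore entail $\id[M] = 0$ in $\cat{D}(M,M)$, which in a category with a zero object means $M \cong 0$, contradicting the assumption that $m$ is nonzero. Hence $m$ cannot be unbiased.

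I do not expect a real technical obstacle here: the only point requiring a moment's care is the observation that the kernel under examination is legitimately allowed back into the universally quantified unbiasedness condition as one of the copyable kernels, so that its own self-overlap $P(m^\dag \after m)$ is pitted against the vanishing overlap $P(m^\dag \after 0)$ with a trivial kernel. Everything else is the elementary identity $m^\dag \after m = \id[M]$ supplied by dagger monicity of kernels, together with the copyability of the trivial kernels already recorded in Example~\ref{ex:zeroone}.
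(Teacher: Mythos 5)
Your proof is correct and follows the same strategy as the paper's: exhibit two copyable kernels on which the overlap projections $P(x^\dag \after -)$ disagree, using dagger monicity of the given kernel to make one of them the identity and the other zero. The only difference is the choice of witnesses --- the paper takes $k = \id$ and $l = x^\perp$, which requires invoking Lemma~\ref{lem:orthogonalcopyable} (a forward reference) for copyability of $x^\perp$, whereas your choice $k = m$ and $l = 0$ needs only Example~\ref{ex:zeroone} and the hypothesis itself, so it is marginally more self-contained.
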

\begin{proof}
  Let $x$ be a nonzero copyable kernel.
  Then $x^\perp$ is copyable, too, by Lemma~\ref{lem:orthogonalcopyable}.
  The trivial kernels are always copyable by Example~\ref{ex:zeroone}. 
  Hence $k=\id$ and $l=x^\perp$ are both copyable kernels. But
  \[
      P(x^\dag \after k)
    = P(x^\dag) 
    = x^\dag \after x 
    \neq 0 
    = P(0)
    = P(x^\dag \after \ker(x^\dag)) 
    = P(x^\dag \after l),
  \]
  and therefore $x$ cannot be unbiased.
  \qed
\end{proof}

\begin{definition}
\label{def:mutuallyunbiased}
  Two classical structures are \emph{mutually unbiased}
  if a nontrivial kernel is unbiased relative to one whenever it is copyable
  along the other. 
\end{definition}

\begin{proposition}
  Mutually unbiased classical structures are partially complementary. 
\end{proposition}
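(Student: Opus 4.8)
The plan is to argue by contradiction, unwinding the two definitions and feeding the result into Lemma~\ref{lem:ifcopyablethennotunbiased}. Write the two classical structures as $\delta$ and $\delta'$. Suppose they were \emph{not} partially complementary in the sense of Definition~\ref{def:complementaryclassicalstructures}; then by definition there would exist a nontrivial kernel $k$ that is simultaneously copyable along $\delta$ and copyable along $\delta'$.

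First I would exploit mutual unbiasedness (Definition~\ref{def:mutuallyunbiased}) in the direction ``copyable along $\delta$ implies unbiased relative to $\delta'$'': since $k$ is a nontrivial kernel copyable along $\delta$, it is unbiased relative to $\delta'$. Next I would invoke Lemma~\ref{lem:ifcopyablethennotunbiased} for the structure $\delta'$: a nontrivial kernel is in particular nonzero, the trivial kernels being exactly $0$ and $\id$ by Example~\ref{ex:zeroone}, so as $k$ is nonzero and copyable along $\delta'$, it cannot be unbiased relative to $\delta'$. These two conclusions contradict each other, so no such $k$ exists and $\delta, \delta'$ are partially complementary.

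The argument is almost entirely bookkeeping of which structure plays which role, and I expect no genuine obstacle. The only point requiring a moment's care is orienting the quantifiers correctly: mutual unbiasedness is used to transfer copyability \emph{along one} structure into unbiasedness \emph{relative to the other}, whereas Lemma~\ref{lem:ifcopyablethennotunbiased} relates copyability and unbiasedness \emph{with respect to the same} structure; lining these up is precisely what forces the contradiction to land on $\delta'$. It is also worth noting that only one of the two implications packaged into the word ``mutually'' is actually used, so the weaker one-sided hypothesis would already suffice for this direction.
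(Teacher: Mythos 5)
Your proof is correct and follows essentially the same route as the paper: take a nontrivial kernel copyable along both structures, use mutual unbiasedness to deduce it is unbiased relative to $\delta'$, and contradict this via Lemma~\ref{lem:ifcopyablethennotunbiased} applied to $\delta'$. Your explicit bookkeeping of which structure plays which role, and the observation that nontrivial implies nonzero, only makes the paper's terser argument more precise.
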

\begin{proof}
  Suppose that $\delta$ and $\delta'$ are mutually unbiased,
  and let $k$ be a kernel in the intersection. That is,
  $k$ is copyable along both $\delta$ and $\delta'$. By
  Lemma~\ref{lem:ifcopyablethennotunbiased}, $k$ cannot be unbiased
  relative to $\delta'$. This contradicts mutual unbiasedness unless
  $k$ were a trivial kernel.
  \qed  
\end{proof}

There is no converse to the previous proposition. For example, consider
the object $\field{C}^2$ in the category $\Cat{Hilb}$. A classical
structure corresponds to an orthonormal basis of $\field{C}^2$, and
hence corresponds (up to sign) to a single ray. The collections of copyables of
classical structures induced by two different rays always have trivial
intersection. But certainly not every pair of different orthonormal
bases is mutually unbiased.  We conclude that
Definition~\ref{def:mutuallyunbiased} is too strong for our purposes.

\section{Boolean subalgebras of orthomodular lattices}
\label{sec:orthomodular}

This section concerns level (iii) of the Introduction. We will prove
that kernels that are copyable along $\delta$ form a sublattice of the orthomodular lattice of all kernel subobjects of $X$, that is be Boolean as soon as it is closed under complements.

\begin{lemma}
\label{lem:meetcopyable}
  The copyable kernels form a sub-meetsemilattice of $\KSub(X)$.
\end{lemma}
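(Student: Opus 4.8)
The meet-semilattice $\KSub(X)$ has a top element, namely the identity kernel $\id \colon X \to X$, which is copyable by Example~\ref{ex:zeroone}; so the real content is closure under the binary meet of $\KSub(X)$. My plan is to deduce this from the fact that copyability has already been recast as a subobject condition in $\Cat{CS}[\cat{D}]$. By Corollary~\ref{cor:characterizationcopyable} the copyable kernels along $\delta$ are exactly the underlying kernels of the kernel subobjects of $(X,\delta)$ in $\Cat{CS}[\cat{D}]$, and by Proposition~\ref{prop:CShaskernels} this latter category is again a dagger kernel category. Hence $\KSub_{\Cat{CS}[\cat{D}]}(X,\delta)$ is an orthomodular lattice and in particular has binary meets. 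What remains is to check that these meets, computed in $\Cat{CS}[\cat{D}]$, are carried by the forgetful functor $U \colon \Cat{CS}[\cat{D}] \to \cat{D}$ to the meets computed in $\KSub(X)$; granting that, the $\cat{D}$-meet of two copyable kernels is visibly copyable, being the image under $U$ of a kernel subobject of $(X,\delta)$.

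The reason $U$ preserves these meets is that meets of kernel subobjects in any dagger kernel category are built from nothing but kernels, cokernels, and composition: for kernels $a$ and $b$ into a common object, their meet is represented by the restriction $a \wedge b = b \after \ker(\coker(a) \after b)$ of $b$ to the part of its domain annihilated by $\coker(a)$. Now $U$ is a dagger functor that creates kernels, which is precisely what the proof of Proposition~\ref{prop:CShaskernels} establishes: the $\Cat{CS}[\cat{D}]$-kernel of a morphism is its $\cat{D}$-kernel equipped with the induced classical structure. Being a dagger functor, $U$ therefore also preserves cokernels, since $\coker(f)=\ker(f^\dag)^\dag$, and it trivially preserves composition. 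Applying $U$ to the expression $b \after \ker(\coker(a) \after b)$ formed in $\Cat{CS}[\cat{D}]$ thus yields the identical expression formed in $\cat{D}$, which is exactly $U(a) \wedge U(b)$ in $\KSub(X)$.

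Assembling these observations: if $k$ and $l$ are copyable kernels, then $k \wedge l$ in $\KSub(X)$ coincides with $U$ applied to the meet of the two corresponding kernel subobjects of $(X,\delta)$, and is therefore copyable by Corollary~\ref{cor:characterizationcopyable}. The main obstacle is the step isolated above, namely that $U$ transports meets, which I reduce to the stability of the cokernel--kernel formula under $U$ and ultimately to $U$ creating kernels. A direct alternative would form the pullback $M$ of $k$ and $l$, with dagger monic projection $p \colon M \to K$, and try to produce $\delta_M$ through Lemma~\ref{lem:copyablefillin} by showing that $\delta_k \after p \colon M \to K \tensor K$ factors through $p \tensor p$; but verifying this factorization means unwinding the tensor--meet interaction of Lemma~\ref{lem:kerftensorf}, and is appreciably more laborious than the functorial argument.
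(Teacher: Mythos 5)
Your proof is correct, but it takes a genuinely different route from the paper's. The paper argues directly: it forms the pullback defining $k \wedge l$, assembles a cube from the two copyability squares supplied by Lemma~\ref{lem:copyablefillin}, obtains the mediating morphism $\varphi$ from the universal property of the pullback, and reads off the commuting rectangle witnessing $\delta \after P(k\wedge l) = (P(k\wedge l)\tensor P(k\wedge l))\after\delta$. You instead route everything through $\Cat{CS}[\cat{D}]$: copyable kernels are the kernel subobjects of $(X,\delta)$ there (Corollary~\ref{cor:characterizationcopyable}), that category is again a dagger kernel category (Proposition~\ref{prop:CShaskernels}), so its $\KSub$ has meets, and the forgetful functor carries those meets to meets in $\KSub(X)$ because the meet is the composite $b \after \ker(\coker(a)\after b)$, built from operations the functor preserves. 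You correctly identify that order-preservation alone would not give this, and your fix is sound; the one thing you import beyond what the paper states is the explicit kernel--cokernel formula for the meet, which lives in the cited reference rather than in the paper (the paper only quotes that $\KSub$ is an orthomodular lattice), so your preservation step genuinely needs that extra unpacking. In exchange your argument buys more: the same ``kernels in $\Cat{CS}[\cat{D}]$ are computed as in $\cat{D}$'' reasoning applied to $k^\dag$ shows at once that $k^\perp=\ker(k^\dag)$ is copyable, so Lemma~\ref{lem:orthogonalcopyable} comes for free and the sublattice parts of Theorem~\ref{thm:copyableboolean} collapse into the single statement that the forgetful functor preserves the $\KSub$ structure. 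Two minor remarks: the paper checks that the \emph{bottom} element $0$ is copyable where you check the top, but both are covered by Example~\ref{ex:zeroone}; and the ``direct alternative'' you sketch at the end is essentially the paper's actual proof, which does not in fact invoke Lemma~\ref{lem:kerftensorf} --- the pullback cube alone suffices --- so it is rather less laborious than you suggest.
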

\begin{proof}
  The bottom element 0 is always copyable by Example~\ref{ex:zeroone}. So we
  have to prove that if $k$ and $l$ are copyable 
  kernels, then so is $k \wedge l$. Recall that $k \wedge l$ is
  defined as the pullback.
  \[\xymatrix{
    K \wedge L \pullback \ar^-{p}[r] \ar_-{q}[d] & L \ar^-{l}[d] \\
    K \ar_-{k}[r] & X
  }\]
  Together with the assumption that $k$ and $l$ are copyable, this
  means that the top, back, right and bottom face of the following
  cube commute:
  \[\xymatrix@C-3ex@R-3ex{
    & K \ar^-{k}[rr] & & X \\
    K \wedge L \ar^-{q}[ur] \ar^(.75){p}[rr] & & L \ar^-{l}[ur] \\
    & K \tensor K \ar|-{\hole}^(.25){\delta_k^\dag}[uu] 
      \ar|-{\hole}_(.25){k \tensor k}[rr]
    & & X \tensor X. \ar_-{\delta^\dag}[uu]\\
    (K \wedge L)^{\tensor 2} \ar@{-->}^-{\varphi}[uu] \ar^-{q
      \tensor q}[ur] \ar_-{p \tensor p}[rr] 
    & & L \tensor L \ar_-{l \tensor l}[ur] \ar_(.75){\delta_l^\dag}[uu]
  }\]
  Hence $l \after \delta_l^\dag \after (p \tensor p) = k \after
  \delta_k^\dag \after (q \tensor q)$. Therefore, by the universal
  property of pullbacks, there exists a dashed morphism $\varphi$
  making the left and front sides of the above cube commute.
  Using the fact that $p$ and $q$ are dagger monic, we deduce $\varphi
  = (k \wedge l)^\dag \after \delta^\dag \after ((k \wedge l) \tensor
  (k \wedge l))$. 
  This means that the left square in the following diagram commutes:
  \[\xymatrix@C+6ex{
      X \ar^-{(k \wedge l)^\dag}[r] \ar_-{\delta}[d] 
    & K \wedge L \ar^-{\varphi^\dag}[d] \ar^-{k \wedge l}[r]
    & X \ar^-{\delta}[d] \\
      X \tensor X \ar_-{(k \wedge l)^\dag \tensor (k \wedge l)^\dag}[r] 
    & (K \wedge L) \tensor (K \wedge L) \ar_-{(k \wedge l) \tensor (k
      \wedge l)}[r]
    & X \tensor X.
  }\]
  The right square is seen to commute analogously---take daggers of all
  the vertical morphisms in the cube. Therefore the whole rectangle 
  commutes. In other words, $\delta \after P(k \wedge l) = (P(k \wedge
  l) \tensor P(k \wedge l)) \after \delta$, that is, $k \wedge l$ is
  copyable. 
  \qed
\end{proof}

\begin{lemma}
\label{lem:Booleanness}
  \cite[Theorem~1]{heunenjacobs:daggerkernellogic}
  An orthocomplemented sublattice $L$ of $\KSub(X)$ is Boolean if
  and only if $k \wedge l = 0$ implies $l^\dag \after k = 0$ for all
  $k,l \in L$.
  \qed
\end{lemma}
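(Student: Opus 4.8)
The plan is to reduce the categorical criterion to a purely lattice-theoretic one and then prove that reformulation inside $L$. First I would record the translation between the two sides. Since $l^\perp = \ker(l^\dag)$, the universal property of kernels says that a kernel $k \colon K \to X$ factors through $\ker(l^\dag)$ --- that is, $k \leq l^\perp$ in $\KSub(X)$ --- precisely when $l^\dag \after k = 0$. So the condition ``$k \wedge l = 0$ implies $l^\dag \after k = 0$'' says exactly that within $L$ disjointness forces orthogonality: $k \wedge l = 0 \Rightarrow k \leq l^\perp$. I would also note at the outset that $L$, being an orthocomplemented sublattice of the orthomodular lattice $\KSub(X)$, is itself orthomodular: for $a \leq b$ in $L$ the witness $a \vee (b \wedge a^\perp)$ is computed in $\KSub(X)$, lies in $L$ because $L$ is closed under $\perp$, $\wedge$ and $\vee$, and equals $b$ already there. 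It therefore remains to prove the order-theoretic statement: an orthomodular lattice is Boolean exactly when disjointness implies orthogonality.

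The forward direction is immediate, since in a Boolean algebra $k \wedge l = 0$ is equivalent to $k \leq l^\perp$. For the converse I would show that the hypothesis forces every pair of elements to commute, whence $L$ is Boolean (an orthomodular lattice in which all pairs are compatible equals its own center and is thus Boolean). Fix $a,b \in L$; commutation is the identity $a = (a \wedge b) \vee (a \wedge b^\perp)$. As $a \wedge b \leq a$, orthomodularity gives $a = (a \wedge b) \vee d$ with $d = a \wedge (a \wedge b)^\perp$, so it suffices to identify $d$ with $a \wedge b^\perp$. One inclusion is formal: $a \wedge b^\perp \leq a$, and $a \wedge b^\perp \leq b^\perp \leq (a \wedge b)^\perp$, hence $a \wedge b^\perp \leq d$. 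For the reverse I would compute $d \wedge b$: from $d \leq a$ we get $d \wedge b \leq a \wedge b$, while $d \leq (a \wedge b)^\perp$ gives $d \wedge b \leq (a \wedge b)^\perp$, so $d \wedge b \leq (a \wedge b) \wedge (a \wedge b)^\perp = 0$. Both $d$ and $b$ lie in $L$, so the hypothesis applies to this disjoint pair and yields $d \leq b^\perp$; together with $d \leq a$ this gives $d \leq a \wedge b^\perp$, closing the equality.

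The main obstacle is precisely this last inclusion, and it is where the entire content of the lemma resides: without the disjointness-implies-orthogonality assumption there is no way to pass from $d \wedge b = 0$ to $d \leq b^\perp$, and indeed in a general orthomodular lattice $a$ and $b$ need not commute. Everything else --- the kernel-factorization translation $l^\dag \after k = 0 \Leftrightarrow k \leq l^\perp$, the inheritance of orthomodularity by $L$, and the passage from ``all pairs compatible'' to Boolean --- is routine once this inclusion is secured.
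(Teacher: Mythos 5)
Your proof is correct. There is nothing in the paper to compare it against: the lemma is stated with an immediate \(\Box\) and imported wholesale from Theorem~1 of the cited dagger-kernel-categories paper, so you have supplied the argument the paper outsources. Each of your steps checks out. The translation \(l^\dag \after k = 0 \Leftrightarrow k \leq l^\perp\) is exactly the universal property of \(\ker(l^\dag)\); the sublattice \(L\) does inherit orthomodularity because the witness \(a \vee (a^\perp \wedge b)\) is formed from operations under which \(L\) is closed and already equals \(b\) in \(\KSub(X)\); and your computation \(d \wedge b \leq (a\wedge b)\wedge(a\wedge b)^\perp = 0\), followed by the hypothesis to get \(d \leq b^\perp\) and hence \(d = a \wedge b^\perp\), correctly shows that every pair of elements of \(L\) commutes. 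The one ingredient you use as a black box --- that an orthomodular lattice in which all pairs commute is distributive, hence Boolean --- is the standard Foulis--Holland fact and is fair to invoke; if you wanted the argument fully self-contained you would need to include that step as well.
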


\begin{theorem}
\label{thm:copyableboolean}
  The sub-meetsemilattice of copyable kernels of the orthomodular lattice $\KSub(X)$ satisfies $k \wedge l \implies l^\dag \circ k = 0$.
\end{theorem}
\begin{proof}
  % By the previous lemmas, it suffices to prove that if $k \wedge l
  % =0$ for copyable kernels $k$ and $l$, then $l^\dag \after
  % k=0$~. So 
  Let $k$ and $l$ be copyable 
  kernels and suppose $k \wedge l = 0$. Say $k=\ker(f)$ and
  $l=\ker(g)$. Then
  \[
    (f \tensor \id) \after (k \tensor l) = (f \after k) \tensor l = 0
    \tensor l = 0,
  \]
  so that $k \tensor l \leq \ker(f \tensor \id) = k \tensor \id \leq
  (k \tensor \id) \wedge (\id \tensor k) = k \tensor k$. Similarly, $k
  \tensor l \leq l \tensor l$. Therefore the bottom, top, back and right
  faces of the following cube commute:
  \[\xymatrix@C-1ex@R-2ex{
    & K \ar^-{k}[rr] & & X \\
    0 \ar^-{0}[ur] \ar^(.75){0}[rr] & & L \ar^-{l}[ur] \\
    & K \tensor K \ar|-{\hole}^(.25){\delta_k^\dag}[uu] 
      \ar|-{\hole}^(.25){k \tensor k}[rr]
    & & X \tensor X. \ar_-{\delta^\dag}[uu]\\
    K \tensor L \ar@{-->}^-{\varphi}[uu] \ar|-{\id \tensor (k^\dag
      \after l)}[ur] \ar_-{(l^\dag \after k) \tensor \id}[rr] \ar|-{k
      \tensor l}|(.67){\hole}[rrru]
    & & L \tensor L \ar_-{l \tensor l}[ur] \ar_(.75){\delta_l^\dag}[uu]
  }\]
  The universal property of the pullback formed by the top face yields
  the dashed morphism $\varphi$ making the left and front faces
  commute. Hence $\delta_l^\dag \after ((l^\dag \after k) \tensor \id)
  = 0$. But then, as $k$ and $l$ are copyable:
  \begin{align*}
        l^\dag \after k
    & = l^\dag \after k \after \delta_k^\dag \after \delta_k \\
    & = l^\dag \after \delta^\dag \after (k \tensor k) \after \delta_k \\
    & = \delta_l^\dag \after (l^\dag \tensor l^\dag) 
        \after (k \tensor k) \after \delta_k \\
    & = \delta_l^\dag \after ((l^\dag \after k) \tensor \id)
        \after (\id \tensor (l^\dag \after k)) \after \delta_k \\
    & = 0 \after (\id \tensor (l^\dag \after k)) \after \delta_k \\
    & = 0.
  \end{align*}
  This finishes the proof.
  \qed  
\end{proof}

Hence as soon as copyable elements of $\KSub(X)$ are closed under complements $k^\perp = \ker(k^\dag)$, they form a Boolean subalgebra. This is the case in $\cat{Hilb}$, but not in $\cat{Rel}$.
% It follows immediately that the category $\Cat{CS}[\cat{D}]$ is a Boolean dagger kernel category (in the sense of~\cite{heunenjacobs:daggerkernellogic}).
In the category $\Cat{Hilb}$, Example~\ref{ex:Hilb} shows that the copyable
kernels in fact form a maximal Boolean subalgebra. Example~\ref{ex:Rel} shows
that this does not hold generally. For in $\Cat{Rel}$, one
has $\KSub(X) = \powerset(X)$, but the copyable kernels are always
$\{0,1\}$, which are only maximal if $X$ has cardinality
$1$. Similarly, not every maximal Boolean subalgebra $B$ of $\KSub(X)$
induces a classical structure on $X$ of which $B$ are the copyables,
as does happen to be the case in $\Cat{Hilb}$. 

The following definition
expresses the standard view in (order-theoretic) quantum logic that
Boolean subalgebras of orthomodular lattices are regarded as 
embodying complete complementarity. It is precisely what is needed to
make Theorem~\ref{thm:complementarybaseslattices} true.

\begin{definition}
\label{def:complementarylattices}
  Two sub-meetsemilattices of an orthomodular lattice are called
  \emph{partially complementary} when they have trivial intersection.
\end{definition}

\begin{theorem}
\label{thm:complementarybaseslattices}
  Two classical structures are partially complementary if and only if
  their collections of copyable kernels are partially complementary.
  \qed
\end{theorem}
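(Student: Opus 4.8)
The plan is to observe that the statement is a direct unfolding of the two relevant definitions, with Theorem~\ref{thm:copyableboolean} supplying the only substantive ingredient. First I would fix classical structures $\delta$ and $\delta'$ on a single object $X$ and write $B_\delta$ and $B_{\delta'}$ for their respective collections of copyable kernels. By Theorem~\ref{thm:copyableboolean} these are genuine Boolean subalgebras of the orthomodular lattice $\KSub(X)$, so that Definition~\ref{def:complementarylattices} is even applicable to them, and it asks precisely whether $B_\delta \cap B_{\delta'}$ is the trivial two-element subalgebra $\{0,\id\}$.

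The key step is then to identify this intersection with the kernels copyable along both structures. Since each $B$ is by definition exactly the set of copyables of the corresponding classical structure, a kernel lies in $B_\delta \cap B_{\delta'}$ if and only if it is copyable along $\delta$ and copyable along $\delta'$; there is no discrepancy between intersecting as sets and intersecting as Boolean subalgebras. Next I would unfold Definition~\ref{def:complementaryclassicalstructures}: $\delta$ and $\delta'$ are partially complementary exactly when no \emph{nontrivial} kernel is copyable along both, \ie exactly when $B_\delta \cap B_{\delta'}$ contains no nontrivial kernels.

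It remains to match ``contains no nontrivial kernels'' with ``trivial intersection of Boolean subalgebras.'' Here I would invoke Example~\ref{ex:zeroone}: the trivial kernels $0$ and $\id$ are copyable along every classical structure, so they always lie in $B_\delta \cap B_{\delta'}$, and they are precisely the bottom and top elements of $\KSub(X)$. Hence the intersection always contains $\{0,\id\}$, and it equals the trivial subalgebra if and only if it contains nothing more, that is, if and only if no nontrivial kernel is copyable along both. The two conditions therefore coincide, which is the asserted equivalence.

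I do not expect a genuine obstacle. The real content sits entirely in Theorem~\ref{thm:copyableboolean}, which guarantees that the copyables form Boolean subalgebras and so makes Definition~\ref{def:complementarylattices} meaningful, together with the bookkeeping observation that the trivial kernels are exactly the extremal elements shared by every Boolean subalgebra. The only point demanding minor care is confirming that ``trivial intersection'' in Definition~\ref{def:complementarylattices} means the intersection equals $\{0,\id\}$ rather than being empty; this is automatic, since any two Boolean subalgebras of $\KSub(X)$ necessarily share its top and bottom.
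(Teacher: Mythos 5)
Your proposal is correct and matches the paper's intent exactly: the paper states this theorem with no written proof because Definition~\ref{def:complementarylattices} was chosen precisely so that the equivalence follows by unfolding both definitions, with Theorem~\ref{thm:copyableboolean} guaranteeing the copyables form Boolean subalgebras and Example~\ref{ex:zeroone} identifying the trivial kernels with the extremal elements. Your bookkeeping about ``trivial intersection'' meaning $\{0,\id\}$ is the right (and only) point of care.
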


Hence we have linked, fully abstractly, partial complementarity in the
order-theoretic sense to partial complementarity in the sense of classical
structures. 
Various order-theoretic questions about the lattices of copyable
kernels suggest themselves for further investigation. For example, one
could imagine that the width or height of the Boolean sublattice of
copyable kernels is independent of the classical structure, enabling a
general notion of dimension. One could also study how copyability
interacts with closed or compact structure in the category.

\section{Von Neumann algebras}
\label{sec:vonneumannalgebras}

Finally, this section advances to level (i) of the Introduction. We
instantiate the dagger monoidal kernel category $\cat{D}$ to be 
$\Cat{Hilb}$. For any object $H \in \Cat{Hilb}$, the endohomset
$A=\Cat{Hilb}(H,H)$ is then a type I von Neumann algebra (and every
type I von Neumann algebra is of this form). At this
level, the notion of complete complementarity is formalized by
considering all commutative von Neumann subalgebras $C$ of $A$. We
denote the collection of all such subalgebras of $A$ by $\cC(A)$. Let
us recall some facts about this situation. 
\begin{enumerate}
  \item[(a)] The set $\Proj(A) = \{ p \in A \mid p^\dag = p = p^2 \}$ of
    projections is a complete, atomic, atomistic orthomodular
    lattice~\cite[p85]{redei:quantumlogic}.
  \item[(b)] There is an order isomorphism $\Proj(A) \cong
    \KSub(H)$~\cite[Proposition~12]{heunenjacobs:daggerkernellogic}.
  \item[(c)] Any von Neumann algebra is generated by its
    projections~\cite[6.3]{redei:quantumlogic}, so in particular 
    $C = \Proj(C)''$. 
  \item[(d)] Since $C$ is a subalgebra of $A$, also $\Proj(C)$ is a
    sublattice of $\Proj(A)$.
  \item[(e)] Because $C$ is commutative, $\Proj(C)$ is a Boolean
    algebra~\cite[4.16]{redei:quantumlogic}.
\end{enumerate}
The following lemma draws a conclusion of interest from these facts.

\begin{lemma}
\label{lem:CABooleanKSub}
  Commutative von Neumann subalgebras $C$ of $A=\Cat{Hilb}(H,H)$ are
  in bijective correspondence with Boolean subalgebras of $\KSub(H)$. 
\end{lemma}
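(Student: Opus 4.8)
The plan is to exhibit the bijection explicitly. I would send a commutative von Neumann subalgebra $C \subseteq A$ to its projection lattice $\Proj(C)$, and a Boolean subalgebra $B$ of $\KSub(H)$ to the von Neumann algebra $B''$ it generates. Throughout I silently transport along the order isomorphism $\Proj(A) \cong \KSub(H)$ of fact~(b), so that a Boolean subalgebra of $\KSub(H)$ is the same thing as a Boolean subalgebra of $\Proj(A)$, that is, a family of projections closed under the orthomodular operations and forming a Boolean algebra. That the forward map is well defined is almost immediate from the recalled facts: for commutative $C$, fact~(d) gives that $\Proj(C)$ is a sublattice of $\Proj(A)$, fact~(e) that it is Boolean, and it visibly contains the trivial projections $0,\id$ and is orthocomplemented; hence $\Proj(C)$ is a Boolean subalgebra of $\KSub(H)$.

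Next I would check the backward map. Since $B$ is Boolean, its elements are pairwise compatible, and compatibility of projections in $A$ amounts to commutativity, so $B \subseteq B'$. Taking commutants reverses inclusions, whence $B \subseteq B'' \subseteq B'$; thus every element of $B'' = (B')'$ commutes with every element of $B'$, and in particular with every element of $B'' \subseteq B'$, so $B''$ is commutative. By the von Neumann bicommutant theorem $B''$ is a von Neumann subalgebra of $A$, and therefore the backward map indeed lands among commutative von Neumann subalgebras.

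It remains to see the two maps are mutually inverse. One composite is free: starting from $C$ we recover $\Proj(C)'' = C$ by fact~(c). The other composite $B \mapsto \Proj(B'')$ is the crux, and I expect it to be the main obstacle. The inclusion $B \subseteq \Proj(B'')$ costs nothing, since the elements of $B$ are projections already lying in $B''$; the difficulty is the reverse inclusion $\Proj(B'') \subseteq B$. A projection of the generated, weakly closed algebra need not lie in $B$ unless $B$ is closed under the infinitary lattice operations that $B''$ makes available: for instance, in $\ell^2(\field{N})$ the finite-and-cofinite coordinate projections form a Boolean subalgebra whose generated algebra is all of $\ell^\infty(\field{N})$, carrying strictly more projections. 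The clean way to close the gap is to let $B$ range over \emph{complete} Boolean subalgebras, equivalently those closed under arbitrary joins computed in $\Proj(A)$, which matches the fact that $\Proj(C)$ is automatically complete because $C$ is weakly closed; for such $B$ the spectral-theoretic fact that a commutative von Neumann algebra generated by a complete Boolean algebra of projections has exactly that algebra as its projection lattice delivers $\Proj(B'') = B$ and finishes the argument.
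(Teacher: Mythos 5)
Your proposal uses exactly the maps the paper uses---$C \mapsto \Proj(C)$ one way and $B \mapsto B''$ the other, transported along the order isomorphism $\Proj(A) \cong \KSub(H)$ of fact~(b)---so in approach the two proofs coincide. The difference is in the last step: the paper simply asserts ``these mappings are inverses because $\Proj(C)''=C$ and $\Proj(B'')=B$'' and stops, whereas you correctly single out $\Proj(B'') \subseteq B$ as the one inclusion that does not come for free. Your scepticism is justified: the counterexample you give is genuine. The finite-and-cofinite coordinate projections on $\ell^2(\field{N})$ form an orthocomplemented Boolean sublattice $B$ of $\Proj(A)$ in the paper's sense, yet $B'$ is the diagonal algebra $\ell^\infty(\field{N})$, which is maximal abelian, so $B''=\ell^\infty(\field{N})$ and $\Proj(B'')$ is the full power set of $\field{N}$, strictly larger than $B$. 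Hence, read literally with ``Boolean subalgebra'' meaning merely an orthocomplemented Boolean sublattice (the only notion the paper has set up, via Lemma~\ref{lem:Booleanness}), the stated bijection fails in infinite dimension, and since the paper later invokes $H=L^2([0,1])$ it cannot be hiding a finite-dimensionality assumption here. Your proposed repair---restricting to complete Boolean subalgebras, i.e.\ those closed under the arbitrary joins available in $\Proj(A)$, which is automatic for $\Proj(C)$ because $C$ is weakly closed---is the right one, and is what the lemma needs in order to be true as a bijection. So your proof is not merely as good as the paper's; it identifies and fixes a gap that the paper's own one-line justification papers over.
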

\begin{proof}
  As in (b) above, we identify $\Proj(A)$ with $\KSub(H)$.
  A commutative subalgebra $C$ corresponds to the Boolean subalgebra
  $\Proj(C)$. Conversely, a Boolean subalgebra $B$ corresponds to the
  commutative subalgebra $B''$ generated by it. These mappings are
  inverses because $\Proj(C)'' = C$ and $\Proj(B'')=B$.
  \qed
\end{proof}

We now set out to establish the relation between commutative
subalgebras of $A$ and classical structures on $H$.

\begin{lemma}
\label{lem:BooleanKSubcopyables}
  An orthocomplemented sublattice $L$ of $\KSub(H)$ is Boolean if and
  only if the following equivalent conditions hold:
  \begin{itemize}
    \item there exists a classical structure on the greatest element
      of $L$ along which every element of $L$ is copyable;
    \item there exists a classical structure on $H$ along which every
      element of $L$ is copyable.  
  \end{itemize}
\end{lemma}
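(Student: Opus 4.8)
The plan is to carry out the whole argument inside $\Cat{Hilb}$, using Example~\ref{ex:Hilb}: a classical structure on a Hilbert space is an orthonormal basis, and a kernel is copyable along it precisely when it is the closed linear span of a subset of that basis. Write $T$ for the greatest element of $L$, so that every element of $L$ is a closed subspace of $T$. I would then establish three things in turn: that the two displayed conditions are equivalent, that either of them forces $L$ to be Boolean, and---the substantial direction---that a Boolean $L$ admits such a classical structure.

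For the equivalence of the two conditions I would argue by extending and restricting orthonormal bases. Given a basis of $T$ witnessing the first condition, adjoin any orthonormal basis of the orthocomplement $T^\perp$ in $H$ to obtain a basis of $H$; an element of $L$ that was a span of basis vectors of $T$ remains a span of basis vectors of $H$, so the second condition holds. Conversely, given a basis of $H$ witnessing the second condition, observe that $T\in L$ is itself copyable, hence $T$ is the closed span of some subset $S$ of that basis; the vectors indexed by $S$ then form a basis of $T$, and since every element of $L$ lies below $T$ its defining set of basis vectors is automatically contained in $S$, which yields the first condition.

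The easy implication---either condition implies that $L$ is Boolean---rests on Theorem~\ref{thm:copyableboolean}: the kernels copyable along the fixed classical structure form a Boolean, hence in particular distributive, sublattice of $\KSub(H)$. As $L$ is a sublattice of $\KSub(H)$ all of whose elements are copyable, $L$ is a sublattice of this distributive lattice and is therefore distributive; a distributive orthocomplemented lattice is automatically Boolean, because in a distributive lattice complements are unique and the orthocomplement already provides one. Hence $L$ is Boolean.

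The hard direction is producing a classical structure from a Boolean $L$, and this is where I expect the real work to sit. I would extend $L$ to a maximal Boolean subalgebra $B$ of $\KSub(T)$---such a $B$ exists by Zorn's lemma, since the union of a chain of Boolean subalgebras is again Boolean---and then invoke the feature of $\Cat{Hilb}$ recorded in the discussion after Theorem~\ref{thm:copyableboolean}: every maximal Boolean subalgebra of the subspace lattice is exactly the set of copyables of some classical structure. That classical structure lives on $T$ and makes every element of $B$, and a fortiori every element of $L$, copyable, establishing the first condition, from which the second follows by the equivalence already proved. The crux, and the only genuinely analytic ingredient, is precisely this correspondence between maximal Boolean subalgebras of $\KSub(T)$ and classical structures on $T$: it amounts to the assertion that a maximal commuting family of projections can be simultaneously diagonalised by an orthonormal basis, and it is here that the spectral content of $\Cat{Hilb}$ (together with whatever dimensional hypothesis underwrites that stated fact) is consumed. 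Everything else is lattice-theoretic bookkeeping.
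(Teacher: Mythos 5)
Your treatment of the equivalence of the two bulleted conditions (extend or restrict an orthonormal basis along $T \leq H$) and of the easy direction (copyability along a common classical structure forces $L$ to be Boolean, via Theorem~\ref{thm:copyableboolean} plus the fact that an orthocomplemented distributive lattice is Boolean) is correct and consistent with the paper. The gap is in the hard direction. You extend $L$ to a maximal Boolean subalgebra $B$ of $\KSub(T)$ and then invoke the claim that every maximal Boolean subalgebra of $\KSub(T)$ is exactly the set of copyables of some classical structure on $T$. That claim is nowhere proved in the paper---the sentence following Theorem~\ref{thm:copyableboolean} asserts it only in passing---and it is false for infinite-dimensional $T$. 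The projection lattice of $L^\infty([0,1])$ acting on $L^2([0,1])$ is a maximal Boolean subalgebra of $\KSub(L^2([0,1]))$ (any projection commuting, in the orthomodular sense, with all of $\Proj(L^\infty)$ commutes with it as an operator and hence already lies in the maximal abelian algebra $L^\infty$), yet it is atomless, whereas by Example~\ref{ex:Hilb} the copyables of any classical structure form an atomic Boolean algebra isomorphic to $\powerset(\mathrm{basis})$. Your own gloss of the key step---that a maximal commuting family of projections can be simultaneously diagonalised by an orthonormal basis---is precisely the statement that fails in the presence of continuous spectrum. So the entire content of the lemma has been offloaded onto an unsupported assertion, and the ``dimensional hypothesis'' you gesture at is doing all of the work.

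The paper's own argument does not pass through maximality at all. It uses completeness and atomicity of $\KSub(H)$ to express $\bigvee L$ via the atoms $a_i$ below it, which are one-dimensional subspaces by well-pointedness; it derives pairwise orthogonality of these atoms from Booleanness of $L$ via Lemma~\ref{lem:Booleanness}; and it then takes the resulting orthonormal family as a basis of (the domain of) $\bigvee L$, extends it to $H$ if desired, and lets the induced classical structure do the rest via Example~\ref{ex:Hilb}. To repair your route you would either have to restrict to finite dimensions from the outset, where simultaneous diagonalisation of a maximal commuting family of projections is genuine, or replace the maximal-Boolean-subalgebra detour by a direct construction of an orthonormal basis adapted to $L$---which is essentially what the paper does.
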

\begin{proof}
  Necessity is established by Theorem~\ref{thm:copyableboolean}.
  For sufficiency, let $L$ be a Boolean sublattice of $\KSub(H)$.
  Since $\KSub(H)$ is complete by (a) above, $\bigvee L$ exists. By
  atomicity (a), $\bigvee L$ is completely determined by the set of
  atoms $a_i$ below it. By definition of atoms, $a_i \wedge a_j = 0$
  when $i \neq j$. Because $L$ is Boolean, it follows from
  Lemma~\ref{lem:Booleanness} that $a_i$ and
  $a_j$ are orthogonal. Also, because $\Cat{Hilb}$ is simply
  well-pointed, the kernels $a_i$ correspond to one-dimensional
  subspaces~\cite[Lemma~11]{heunenjacobs:daggerkernellogic}. That is,
  the $a_i$ give an orthonormal basis for (the domain of) the greatest
  element of $L$ (which can be extended to an orthonormal basis of
  $H$). This, in turn, induces a classical structure 
  $\delta$ on the greatest element of $L$ (or
  $H$)~\cite{abramskyheunen:hstar}. Finally, Example~\ref{ex:Hilb}
  shows that the kernels $a_i$, and hence all $l \in L$, are copyable
  along $\delta$. 
  \qed  
\end{proof}

\begin{theorem}
\label{thm:CA}
  For the von Neumann algebra $A=\Cat{Hilb}(H,H)$:
  \begin{align*}
    \cC(A) \cong 
    \{ L \subseteq \KSub(H) \mid\; & L \mbox{ orthocomplemented
      sublattice}, \\ & \exists_{\delta \colon 1_L \to 1_L \tensor
      1_L}\forall_{l \in L} \,.\, l \mbox{ copyable along }\delta \}.
  \end{align*}
\end{theorem}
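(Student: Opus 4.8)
The proof should be a direct chaining of the two preceding lemmas, so my plan is to reduce both sides of the claimed correspondence to ``Boolean subalgebras of $\KSub(H)$'' and then match them. First I would invoke Lemma~\ref{lem:CABooleanKSub} to handle the left-hand side: it supplies a bijection between $\cC(A)$ and the set of Boolean subalgebras of $\KSub(H)$, implemented by $C \mapsto \Proj(C)$ with inverse $B \mapsto B''$ (using $\Proj(C)''=C$ and $\Proj(B'')=B$). This already does all the von~Neumann-algebraic work, so after this step $\cC(A)$ is replaced by a purely order-theoretic object.

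Next I would rewrite the right-hand side. By Lemma~\ref{lem:BooleanKSubcopyables}, an orthocomplemented sublattice $L$ of $\KSub(H)$ is Boolean exactly when there is a classical structure $\delta \colon 1_L \to 1_L \tensor 1_L$ on the greatest element of $L$ along which every $l \in L$ is copyable (equivalently, one on all of $H$). But this is verbatim the predicate defining the set displayed on the right of the theorem. Hence that set \emph{is} the set of Boolean subalgebras of $\KSub(H)$, and composing the two identifications yields the correspondence $\cC(A) \cong \{\dots\}$ as stated.

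To upgrade the bare bijection to the order isomorphism that $\cong$ connotes — both sides being posets under inclusion — I would observe that $C \mapsto \Proj(C)$ and $B \mapsto B''$ are each inclusion-preserving, and that the second identification is a genuine equality of sets carrying the same containment order; the composite is therefore monotone in both directions. The upshot is that there is no serious obstacle at the level of this theorem: essentially all the difficulty already resides in Lemmas~\ref{lem:CABooleanKSub} and~\ref{lem:BooleanKSubcopyables} (and, through the latter, in Theorem~\ref{thm:copyableboolean} and the well-pointedness of $\Cat{Hilb}$). The one point I would take care to confirm is that the two lemmas really refer to the same notion of ``Boolean subalgebra of $\KSub(H)$'' — namely that $\Proj(C)$ is an orthocomplemented sublattice and that the existentially quantified classical structure in Lemma~\ref{lem:BooleanKSubcopyables} indeed lives on $1_L$, matching the quantifier $\exists_{\delta\colon 1_L \to 1_L \tensor 1_L}$ in the statement — after which the result follows immediately.
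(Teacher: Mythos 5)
Your proposal is correct and matches the paper's own (one-line) proof exactly: the theorem is obtained by composing the bijection of Lemma~\ref{lem:CABooleanKSub} with the characterization of Boolean sublattices in Lemma~\ref{lem:BooleanKSubcopyables}. Your additional remarks on order-preservation and on checking that both lemmas use the same notion of Boolean subalgebra are sensible elaborations of the same argument, not a different route.
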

\begin{proof}
  This is just a combination of Lemma~\ref{lem:CABooleanKSub} and
  Lemma~\ref{lem:BooleanKSubcopyables}.
  \qed
\end{proof}

The previous theorem implies that for any classical structure $\delta$
on $H$, there is an induced commutative von Neumann subalgebra $C \in \cC(A)$
corresponding to the lattice $L$ of all copyable kernels. The following
definition and corollary finish the connections of partial complementarity
across the three levels discussed in the Introduction. 

\begin{definition}
  Von Neumann subalgebras of $\Cat{Hilb}(H,H)$ are \emph{partially
  complementary} when their intersection is the trivial subalgebra
  $\{z \cdot \id \mid z \in \field{C}\}$. 
\end{definition}

Notice that this definition does not need the subalgebras to be
commutative. It has no need for finite dimension, either, in contrast
to works that rely on the Hilbert-Schmidt inner product to make
$\Cat{Hilb}(H,H)$ into a Hilbert space~\cite{petz:complementarity}. 
Compare also~\cite[Definition~1.1]{parthasarathy:estimation}.

\begin{corollary}
  Two classical structures on an object $H$ in $\Cat{Hilb}$ are
  partially complementary if and only if they induce partially
  complementary commutative von Neumann subalgebras of $\Cat{Hilb}(H,H)$.  
  \qed
\end{corollary}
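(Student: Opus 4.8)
The plan is to chain together the correspondences already established, so that the only genuine work lies in checking that the bijection of Lemma~\ref{lem:CABooleanKSub} carries trivial intersections to trivial intersections. First I would make explicit the subalgebra induced by a classical structure: given $\delta$ on $H$, let $B_\delta \subseteq \KSub(H)$ be its collection of copyable kernels, which is a Boolean subalgebra by Theorem~\ref{thm:copyableboolean}, and let $C_\delta = B_\delta''$ be the commutative von Neumann subalgebra corresponding to it under Lemma~\ref{lem:CABooleanKSub}. This $C_\delta$ is exactly the induced subalgebra referred to after Theorem~\ref{thm:CA}, and since the two maps of Lemma~\ref{lem:CABooleanKSub} are mutually inverse we have $\Proj(C_\delta) = B_\delta$.

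Next I would reduce the classical-structure side to a purely lattice-theoretic statement. By Theorem~\ref{thm:complementarybaseslattices}, $\delta$ and $\delta'$ are partially complementary if and only if $B_\delta$ and $B_{\delta'}$ are partially complementary, which by Definition~\ref{def:complementarylattices} means $B_\delta \cap B_{\delta'} = \{0, \id\}$, the trivial Boolean subalgebra. The entire corollary therefore reduces to the claim that $B_\delta \cap B_{\delta'}$ is trivial if and only if $C_\delta \cap C_{\delta'} = \field{C} \cdot \id$.

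To prove this claim — the heart of the argument — I would assemble three observations. First, $C_\delta \cap C_{\delta'}$ is again a von Neumann algebra, being an intersection of weakly closed $*$-subalgebras, so by fact~(c) it is generated by its projections: $C_\delta \cap C_{\delta'} = \Proj(C_\delta \cap C_{\delta'})''$. Second, a projection lies in $C_\delta \cap C_{\delta'}$ precisely when it lies in both $\Proj(C_\delta) = B_\delta$ and $\Proj(C_{\delta'}) = B_{\delta'}$, whence $\Proj(C_\delta \cap C_{\delta'}) = B_\delta \cap B_{\delta'}$ and so $C_\delta \cap C_{\delta'} = (B_\delta \cap B_{\delta'})''$. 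Third, $\{0, \id\}' = A = \Cat{Hilb}(H,H)$ and $A' = \field{C} \cdot \id$, so $\{0,\id\}'' = \field{C} \cdot \id$. Combining these, $B_\delta \cap B_{\delta'} = \{0, \id\}$ forces $C_\delta \cap C_{\delta'} = \field{C} \cdot \id$; conversely, if $C_\delta \cap C_{\delta'} = \field{C} \cdot \id$ then $B_\delta \cap B_{\delta'} = \Proj(\field{C} \cdot \id) = \{0, \id\}$. Threading these equivalences in both directions yields the biconditional.

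I expect the main obstacle to be bookkeeping rather than depth: I must be careful that $\Proj$ genuinely commutes with intersection, and that fact~(c) applies to $C_\delta \cap C_{\delta'}$, for which it is essential that this intersection be a \emph{von Neumann} algebra (weakly closed) and not merely a $*$-algebra. The only other point requiring attention is the small computation $\{0,\id\}'' = \field{C} \cdot \id$, which rests on $A = \Cat{Hilb}(H,H)$ having trivial center, i.e.\ acting irreducibly on $H$. Once these are in place the result follows immediately from the chain above.
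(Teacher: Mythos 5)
Your proof is correct and follows exactly the route the paper intends: the corollary is stated without explicit proof because it is the immediate concatenation of Theorem~\ref{thm:complementarybaseslattices} with the bijection of Lemma~\ref{lem:CABooleanKSub}, which is precisely your chain. The only thing you add is the (welcome) verification that this bijection sends trivial intersections to trivial intersections, via $C_\delta \cap C_{\delta'} = (B_\delta \cap B_{\delta'})''$ and $\{0,\id\}'' = \field{C}\cdot\id$, which the paper leaves implicit.
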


Unlike $\Proj(A)$, the sublattice $\Proj(C)$ is not atomic for general
$C \in \cC(A)$; for a counterexample, take $H=L^2([0,1])$ and
$C=L^\infty([0,1])$. If this does happen to be the case, for example if we
restrict the ambient category $\cat{D}$ to that of finite-dimensional
Hilbert spaces, we can strengthen the characterization of $\cC(A)$ in
Theorem~\ref{thm:CA}. 

\begin{proposition}
\label{prop:fdCAker}
  For a finite-dimensional Hilbert space $H$ and the von Neumann
  algebra $A=\Cat{fdHilb}(H,H)$:
  \begin{align*}
    \cC(A) \cong \{ (k_i)_{i \in I} \mid \; & 
      k_i \in \KSub(H) \backslash \{0\}, \;
      k_i \wedge k_j = 0 \mbox{ for }i \neq j,  \\ 
      & \exists_{\delta \colon H \to H \tensor H} \,.\, k_i \mbox{ copyable
        along } \delta \}. 
  \end{align*}
\end{proposition}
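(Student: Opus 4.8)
The plan is to refine the bijection of Theorem~\ref{thm:CA} using the fact that finite-dimensionality forces the Boolean sublattices appearing on its right-hand side to be atomic, and then to reparametrize each such lattice by its atoms. Concretely, since every chain in $\KSub(H)$ has length at most $\dim H$, each orthocomplemented Boolean sublattice $L \subseteq \KSub(H)$ is a Boolean algebra of finite height, hence atomic and atomistic; it is therefore canonically recovered from its finite set of atoms. I would thus obtain the desired bijection as the composite of the bijection of Theorem~\ref{thm:CA} with the classical identification of a finite Boolean algebra with the power set of its atoms, the only task being to check that each half of this extra bijection lands in the set named in the statement.

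For the forward direction, send an element $L$ of the right-hand side of Theorem~\ref{thm:CA} to its family of atoms $(k_i)_{i \in I}$. These are nonzero by definition; they satisfy $k_i \wedge k_j = 0$ for $i \neq j$ because distinct atoms of a Boolean algebra are disjoint; and each $k_i$ is copyable along the witnessing classical structure $\delta$ simply because $k_i \in L$, every element of which is copyable along $\delta$. By Lemma~\ref{lem:BooleanKSubcopyables} this $\delta$ may be taken on all of $H$, matching the quantifier $\exists_{\delta \colon H \to H \tensor H}$ in the statement.

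For the backward direction, send a family $(k_i)$ to the join-closure $L = \{ \bigvee_{i \in S} k_i \mid S \subseteq I \}$. Here I would invoke Theorem~\ref{thm:copyableboolean}: the kernels copyable along $\delta$ form a Boolean subalgebra $B_\delta$ of $\KSub(H)$ containing each $k_i$, so every join $\bigvee_{i\in S} k_i$ is again copyable and $L \subseteq B_\delta$. Distributivity in $B_\delta$ together with pairwise disjointness yields $\bigvee_{i \in S} k_i \wedge \bigvee_{j \in T} k_j = \bigvee_{i \in S \cap T} k_i$, so $L$ is closed under finite meets and joins, while the complement of $\bigvee_{i \in S} k_i$ relative to the top $\bigvee_{i} k_i$ is $\bigvee_{i \notin S} k_i \in L$; thus $L$ is an orthocomplemented Boolean sublattice all of whose elements are copyable along $\delta$, as Theorem~\ref{thm:CA} demands. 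That the two maps are mutually inverse then reduces to two elementary checks: the atoms of such a join-closure are exactly the original $k_i$ (if $\bigvee_{i \in S} k_i \leq k_j$ then disjointness forces $S \subseteq \{j\}$, and any join over $|S| \geq 2$ strictly dominates its summands), and conversely atomisticity returns $L$ as the join-closure of its own atoms.

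I expect the sole genuine obstacle to be the appeal to atomicity, which is precisely where finite-dimensionality is indispensable and which fails in general---as the remark preceding the statement illustrates with $C = L^\infty([0,1])$ inside $\Cat{Hilb}(L^2([0,1]),L^2([0,1]))$, whose projection lattice is not atomic. Once atomicity is secured, the remaining steps---closure of the copyable kernels under finite joins (from Theorem~\ref{thm:copyableboolean}) and the power-set representation of a finite Boolean algebra---are routine.
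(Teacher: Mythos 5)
Your proposal is correct and follows essentially the same route as the paper, which simply observes that finite-dimensionality makes $\Proj(C)$ atomic so that the sublattice $L$ of Theorem~\ref{thm:CA} is completely determined by its atoms $(k_i)$. You merely spell out the details of the atoms-versus-join-closure bijection that the paper leaves implicit.
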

\begin{proof}
  Since every $C \in \cC(A)$ is finite-dimensional and hence
  $\Proj(C)$ is atomic, an orthocomplemented
  sublattice $L$ as in Theorem~\ref{thm:CA} is completely determined by its
  atoms $(k_i)$.
  \qed
\end{proof}

Every finite-dimensional C*-algebra is a von Neumann algebra, and
hence in case $H$ is finite-dimensional and $A=\Cat{Hilb}(H,H)$, we
find that $\cC(A)$ is the collection of all commutative C*-subalgebras
of $A$. Notice that the characterization of Proposition~\ref{prop:fdCAker} above
has no need for the cumbersome combinatorial symmetry considerations of
\cite[1.4.5]{heunenlandsmanspitters:bohrification}. 
Corollary~\ref{cor:characterizationcopyable} gives another characterization of
$\cC(A)$ in the finite-dimensional case, purely in terms of classical
structures and their morphisms.

\begin{proposition}
\label{prop:fdCA}
  For a finite-dimensional Hilbert space $H$ and the von Neumann
  algebra $A=\Cat{fdHilb}(H,H)$:
  \begin{align*}
    \cC(A) \cong \{ (\delta_i)_{i \in I} \mid \; & 
      \delta_i \in \Cat{CS}[\Cat{fdHilb}], \\
      & \delta_i, \delta_j \mbox{ partially
        complementary for } i \neq j, \\
      & \exists_{\delta \in \Cat{CS}[\Cat{fdHilb}]}\forall_i \,.\,
      \Cat{CS}[\Cat{fdHilb}](\delta_i,\delta) \neq \emptyset \}.
  \end{align*}
  Hence $\cC(A)$ is isomorphic to the collection of cocones in the
  category of classical substructures on $H$ that are pairwise
  partially complementary. 
  \qed
\end{proposition}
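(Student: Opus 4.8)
The plan is to obtain this as a reformulation of Proposition~\ref{prop:fdCAker} under the translation supplied by Corollary~\ref{cor:characterizationcopyable}, which lets one trade each copyable kernel for the classical structure it carries on its domain. It therefore suffices to produce a bijection between the indexing family $(k_i)$ appearing on the right of Proposition~\ref{prop:fdCAker} and the family $(\delta_i)$ described here, and then to compose with the isomorphism $\cC(A) \cong \{(k_i)\}$ already in hand.

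I would set up the dictionary as follows. Given nonzero, pairwise disjoint kernels $(k_i)$ copyable along a single classical structure $\delta$ on $H$, Proposition~\ref{prop:restrictedfrob} endows each domain $K_i$ with a classical structure $\delta_i := \delta_{k_i}$, and the observation following that proposition---matching exactly Definition~\ref{def:morphismclassicalstructures}---shows that $k_i$ is a dagger monic morphism $\delta_i \to \delta$ in $\Cat{CS}[\Cat{fdHilb}]$, i.e.\ realises $\delta_i$ as a kernel subobject of $\delta$ in the sense of Corollary~\ref{cor:characterizationcopyable}. The single $\delta$ then serves as the common apex, and the existence of the leg $k_i$ gives $\Cat{CS}[\Cat{fdHilb}](\delta_i,\delta) \neq \emptyset$ for every $i$; conversely, the common-target condition of the present statement is nothing but the common-$\delta$ condition of Proposition~\ref{prop:fdCAker} reread through this same correspondence. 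For the backward passage I would recover $k_i$ as the image of a chosen leg $f_i \colon \delta_i \to \delta$, using that $\Cat{CS}[\Cat{fdHilb}]$ is itself a dagger kernel category by Proposition~\ref{prop:CShaskernels}; Corollary~\ref{cor:characterizationcopyable} then makes $k_i$ a copyable kernel of $\delta$ in $\Cat{fdHilb}$, so that Proposition~\ref{prop:fdCAker} applies.

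The one side condition needing genuine argument is that pairwise disjointness $k_i \wedge k_j = 0$ corresponds to pairwise partial complementarity of the $\delta_i$. Regarding $\delta_i$ and $\delta_j$ as substructures of the shared $\delta$ through their legs, Corollary~\ref{cor:characterizationcopyable} identifies the kernels copyable along $\delta_i$ with the $\delta$-copyable kernels lying below $k_i$, and likewise for $j$; a kernel copyable along both therefore lies below $k_i \wedge k_j$, which is itself copyable by Lemma~\ref{lem:meetcopyable}. Hence a nontrivial common copyable exists precisely when $k_i \wedge k_j \neq 0$, so Definition~\ref{def:complementaryclassicalstructures} holds for the pair exactly when $k_i \wedge k_j = 0$---the substructure instance of the equivalence recorded in Theorem~\ref{thm:complementarybaseslattices}. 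With the dictionary and this equivalence established, the two assignments are mutually inverse, since $k_i$ and $\delta_{k_i}$ determine one another, and composing with Proposition~\ref{prop:fdCAker} delivers the claimed isomorphism.

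I expect the essential difficulty to be conceptual rather than computational: because the $\delta_i$ sit on different carriers $K_i$, Definition~\ref{def:complementaryclassicalstructures} only becomes meaningful once the cocone legs are used to view all of them as substructures of one classical structure $\delta$. The point requiring care is that this reading of partial complementarity is independent of the chosen legs, which holds because any dagger monic leg realising $\delta_i$ as a kernel subobject of $\delta$ has the same image $k_i$, so that the meets $k_i \wedge k_j$---and with them the whole correspondence---are well defined.
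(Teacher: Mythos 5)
Your proposal is correct and matches the paper's (essentially unstated) argument: the paper offers no proof beyond the \qed, relying on the remark that Corollary~\ref{cor:characterizationcopyable} recasts Proposition~\ref{prop:fdCAker} in terms of classical structures and their morphisms, which is precisely the dictionary you construct. Your additional care about partial complementarity of the $\delta_i$ only being meaningful through the cocone legs, and about well-definedness of the meets $k_i \wedge k_j$, fills in details the paper leaves implicit.
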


Another, very concrete, characterization in terms of traces and
determinants is
known in the finite-dimensional
case~\cite[Proposition~1.3]{parthasarathy:estimation}. It is easy to
compute, but the above characterization seems conceptually more 
informative and lends itself more readily to generalization.

\section{Concluding remarks}
\label{sec:conclusion}

Observing the similarities across the three levels of quantum mechanics
considered, we can now propose the following precise formulation of
complete complementarity.
\begin{quote}
  A collection of classical structures is \emph{completely
  complementary} when its members are pairwise partially complementary and
  jointly epic. 
\end{quote}
Notice that this formulation is almost information-theoretic.
Compare also \cite{vandenbergheunen:commutativity}; this suggests that
a dagger kernel category could be seen as a colimit (or
amalgamation of some other kind) of its Boolean subcategories.

The view on $\cC(A)$ provided by Section~\ref{sec:vonneumannalgebras} holds
several promises for the study of functors on $\cC(A)$ that we intend
to explore further in future work: 
\begin{itemize}
  \item One can consider variations in the study of $\Cat{Set}$-valued
    functors on $\cC(A)$ by choosing different morphisms on $\cC(A)$: \eg
    inclusions~\cite{heunenlandsmanspitters:bohrification}, or reverse 
    inclusions~\cite{doeringisham:thing}. In the above perspective,
    the natural direction that suggests itself is that of morphisms
    between classical structures, \ie inclusions. Moreover, a more 
    interesting choice of morphisms based on classical structures (see
    \eg\cite{coeckepavlovic:classicalobjects}) could make $\cC(A)$
    into a category that is not just a partially ordered set.  
  \item The topos of functors on $\cC(A)$ can be abstracted away from
    $\Cat{Hilb}$ to any dagger monoidal kernel category that
    satisfies a suitable `spectral assumption' linking commutative
    submonoids of endohomsets to classical structures. For example, one
    could lift Theorem~\ref{thm:CA} or even Proposition~\ref{prop:fdCA} to a definition,
    and study $\Cat{Set}$-valued functors on these characterizations of
    $\cC(A)$ in any dagger monoidal kernel category.
   
    In fact, in this generalized setting, there is no need for the
    base category to be $\Cat{Set}$. After all, the basic objects of
    study of \eg\cite{doeringisham:thing} are really partial orders of
    subobjects in a functor category. This just happens to be a
    Heyting algebra because the functors take values in the topos
    $\Cat{Set}$, but in principle less structured partial orders of
    subobjects are just as interesting, and perhaps are also
    justifiable physically. 
  \item One of the weak points of the study of functors on $\cC(A)$ to
    date is that there is no obvious way to study compound
    systems. That is, there is no obvious (satisfactory) relation
    between $\cC(A \tensor B)$ and $\cC(A)$ and $\cC(B)$. Considering
    $A$ as (a submonoid of) an endohomset opens the broader context of
    a fibred setting in which studying entanglement is possible. 
\end{itemize}
All in all, the above considerations strongly suggest studying
fibrations of all classical structures over all objects of a dagger
(kernel) monoidal category, \ie studying the forgetful functor
$\Cat{CS}[\cat{D}] \to \cat{D}$. 

Finally, we remark that we have not used the H*-axiom (or the
Frobenius equation) at all in this paper. Apparently, the combination
of (copyable) kernels with the dagger monic type $X \to X \tensor X$
of classical structures suffices for these purposes.

{\small{
  \bibliographystyle{plain}
  \bibliography{copyables-fp}

\begin{thebibliography}{10}

\bibitem{abramskycoecke:categoricalquantum}
Samson Abramsky and Bob Coecke.
\newblock Categorical quantum mechanics.
\newblock In {\em Handbook of quantum logic and quantum structures: quantum
  logic}, pages 261--324. Elsevier, 2009.

\bibitem{abramskyheunen:hstar}
Samson Abramsky and Chris Heunen.
\newblock H*-algebras and nonunital frobenius algebras: first steps in
  infinite-dimensional categorical quantum mechanics.
\newblock {\em Clifford Lectures, AMS Proceedings of Symposia in Applied
  Mathematics}, 2011.

\bibitem{vandenbergheunen:commutativity}
{Benno van den} Berg and Chris Heunen.
\newblock Noncommutativity as a colimit.
\newblock {\em Applied Categorical Structures}, 2010.

\bibitem{butterfieldisham:kochenspecker1}
Jeremy Butterfield and Christopher~J. Isham.
\newblock A topos perspective on the {K}ochen-{S}pecker theorem: {I}. {Q}uantum
  states as generalized valuations.
\newblock {\em International Journal of Theoretical Physics},
  37(11):2669--2733, 1998.

\bibitem{coeckeduncan:observables}
Bob Coecke and Ross Duncan.
\newblock Interacting quantum observables: Categorical algebra and
  diagrammatics.
\newblock In {\em Automata, Languages and Programming, ICALP 2008}, number 5126
  in Lecture Notes in Computer Science, pages 298--310. Springer, 2008.

\bibitem{coeckepaquettepavlovic:structuralism}
Bob Coecke, {\'E}ric~O. Paquette, and Du{\v{s}}ko Pavlovi{\'c}.
\newblock {\em Semantic techniques in quantum computation}, chapter Classical
  and quantum structuralism, pages 29--70.
\newblock Cambridge University Press, 2010.

\bibitem{coeckepavlovic:classicalobjects}
Bob Coecke and Du{\v{s}}ko Pavlovi{\'c}.
\newblock Quantum measurements without sums.
\newblock In {\em Mathematics of Quantum Computing and Technology}. Taylor and
  Francis, 2007.

\bibitem{coeckepavlovicvicary:bases}
Bob Coecke, Du{\v{s}}ko Pavlovi{\'c}, and Jamie Vicary.
\newblock A new description of orthogonal bases.
\newblock {\em Mathematical Structures in Computer Science}, 2009.

\bibitem{doeringisham:thing}
Andreas D{\"o}ring and Christopher~J. Isham.
\newblock `{W}hat is a thing?': Topos theory in the foundations of physics.
\newblock In {\em New Structures for Physics}, Lecture Notes in Physics.
  Springer, 2009.

\bibitem{held:complementarity}
Carsten Held.
\newblock The meaning of complementarity.
\newblock {\em Studies in History and Philosophy of Science Part A},
  25:871--893, 1994.

\bibitem{heunenjacobs:daggerkernellogic}
Chris Heunen and Bart Jacobs.
\newblock Quantum logic in dagger kernel categories.
\newblock {\em Order}, 27(2):177--212, 2010.

\bibitem{heunenlandsmanspitters:bohrification}
Chris Heunen, Nicolaas~P. Landsman, and Bas Spitters.
\newblock Bohrification.
\newblock In {\em Deep Beauty}. Cambridge University Press, 2011.

\bibitem{kalmbach:orthomodularlattices}
Gudrun Kalmbach.
\newblock {\em Orthomodular Lattices}.
\newblock Academic Press, 1983.

\bibitem{kochenspecker:hiddenvariables}
Simon Kochen and Ernst Specker.
\newblock The problem of hidden variables in quantum mechanics.
\newblock {\em Journal of Mathematics and Mechanics}, 17:59--87, 1967.

\bibitem{landsman:between}
Nicolaas~P. Landsman.
\newblock {\em Handbook of the philosophy of science vol 2: {P}hilosophy of
  Physics}, chapter Between classical and quantum, pages 417--554.
\newblock North-{H}olland, 2007.

\bibitem{nielsenchuang:quantumcomputation}
Michael~A. Nielsen and Isaac~L. Chuang.
\newblock {\em Quantum Computation and Quantum Information}.
\newblock Cambridge University Press, 2000.

\bibitem{parthasarathy:estimation}
Kalyanapuram~R. Parthasarathy.
\newblock On estimating the state of a finite level quantum system.
\newblock {\em Infinite Dimensional Analytical Quantum Probabilities and
  Related Topics}, 7:607--617, 2006.

\bibitem{pavlovic:frobeniusinrel}
Du{\v{s}}ko Pavlovi{\'c}.
\newblock Quantum and classical structures in nondeterministic computation.
\newblock In P.~Bruza et~al., editor, {\em Third International symposium on
  Quantum Interaction}, volume 5494 of {\em Lecture Notes in Artificial
  Intelligence}, pages 143--157. Springer, 2009.

\bibitem{petz:complementarity}
D{\'e}nes Petz.
\newblock Complementarity in quantum systems.
\newblock {\em Reports on Mathematical Physics}, 59(2):209--224, 2007.

\bibitem{piron:foundations}
Constantin Piron.
\newblock {\em Foundations of quantum physics}.
\newblock Number~19 in Mathematical Physics Monographs. W.A. Benjamin, 1976.

\bibitem{redei:quantumlogic}
Mikl{\'o}s R{\'e}dei.
\newblock {\em Quantum Logic in Algebraic Approach}.
\newblock Kluwer, 1998.

\bibitem{scheibe:quantumlogic}
Erhard Scheibe.
\newblock {\em The logical analysis of quantum mechanics}.
\newblock Pergamon, 1973.

\bibitem{selinger:graphicallanguages}
Peter Selinger.
\newblock A survey of graphical languages for monoidal categories.
\newblock In {\em New Structures for Physics}, Lecture Notes in Physics.
  Springer, 2010.

\bibitem{strocchi:infinite}
Franco Strocchi.
\newblock {\em Elements of quantum mechanics of infinite systems}.
\newblock World scientific, 1985.

\bibitem{vonneumann:grundlagen}
John von Neumann.
\newblock {\em Mathematische Grundlagen der Quantenmechanik}.
\newblock Springer, 1932.

\end{thebibliography}
}}

\end{document}